\newtheorem{theorem}{Theorem}[section]
\newtheorem{definition}[theorem]{Definition}
\newtheorem{example}[theorem]{Example}
\newtheorem{proposition}[theorem]{Proposition}
\newcommand{\minusre}{\hspace{0.3em}\raisebox{0.3ex}{\sl \tiny /}\hspace{0.3em}}
\newcommand{\minusli}{\hspace{0.3em}\raisebox{0.3ex}{\sl \tiny $\setminus $}\hspace{0.3em}}
\newcommand{\lex}{\,\overrightarrow{\times}\,}
\newcommand{\RDP}{\mbox{\rm RDP}}
\newcommand{\RIP}{\mbox{\rm RIP}}
\begin{document}
\title[The lexicographic product and pseudo effect algebras]{The Lexicographic Product of po-groups and $n$-perfect Pseudo Effect Algebras}
\author[A. Dvure\v censkij, J. Kr\v{n}\'avek]{Anatolij Dvure\v censkij$^{1,2}$ and Jan Kr\v{n}\'avek$^2$}
\date{}
\maketitle
\begin{center}  \footnote{Keywords: Pseudo effect algebra, po-group, $\ell$-group,  strong unit, Riesz Decomposition Property, lexicographic product, $n$-perfect pseudo effect algebra

 AMS classification: 81P15, 03G12, 03B50

The paper has been supported by Center of Excellence SAS -~Quantum
Technologies~-,  ERDF OP R\&D Project
meta-QUTE ITMS 26240120022, Slovak Research and Development Agency under the contract APVV-0178-11, the grant VEGA No. 2/0059/12 SAV and by
CZ.1.07/2.3.00/20.0051. }
Mathematical Institute,  Slovak Academy of Sciences,\\
\v Stef\'anikova 49, SK-814 73 Bratislava, Slovakia\\
$^2$ Depart. Algebra  Geom.,  Palack\'{y} Univer.\\
CZ-771 46 Olomouc, Czech Republic\\

E-mail: {\tt
dvurecen@mat.savba.sk} \qquad {\tt jan.krnavek@upol.cz}
\end{center}

\begin{abstract} We will study the existence of different types of the Riesz Decomposition Property for the lexicographic product of two partially ordered groups. A  special attention will be paid to the lexicographic product of the group of the integers with an arbitrary po-group. Then we apply these results to the study of $n$-perfect pseudo effect algebras. We show that the category of strong $n$-perfect pseudo-effect algebras is categorically equivalent to the category
of torsion-free directed partially ordered groups with RDP$_1.$

\end{abstract}

\section{Introduction}

Immediately after appearing  quantum mechanics, it was observed that the classical laws of the Newton physics as well as of the probabilistic model of Kolmogorov fail in measurements of quantum observables.  Therefore, it was necessary to describe an appropriate algebraic model where we can model quantum phenomena. The first such model was a quantum logic which is an orthomodular poset. Latter there appeared orthomodular posets, orthoalgebras, and in 1994, Foulis and Bennett \cite{FoBe} suggested a successful model, effect algebras. It is a partial algebra with a partially defined addition, $+$, which means that $a+b$ denotes the conjunction of mutually excluding events $a$ and $b.$ Its main idea is to describe an appropriate model for the so-called POV-measures in the effect algebra $\mathcal E(H)$ of Hermitian operators between the zero and the identity operators of a real, complex or quaternionic Hilbert space $H.$

In many important cases, effect algebras are intervals in  partially ordered groups (= po-groups) with strong unit. Such an important result was established by Ravindran \cite{Rav} for  effect algebras with the Riesz Decomposition Property, RDP. We note that RDP means roughly speaking a weak form of the distributivity or a possibility of a joint refinement of two decompositions. It was observed also that the class of effect algebras contains the class of MV-algebras, \cite{Cha}, algebras which model many-valued reasoning. They can be characterized as lattice ordered effect algebras with RDP.

It is necessary to recall that RDP fails for $\mathcal E(H).$ However, using a delicate result of von Neumann \cite{vNe}, every set of mutually commuting elements in $\mathcal E(H)$ can be embedded into a maximal Abelian von Neumann algebra $\mathcal A$, and every element of $\mathcal A$   is a real-valued function of some element $a$ in $\mathcal A.$ This can be by \cite{Pul, Var} transformed into an MV-algebra (as functions of $a$) which is even  a $\sigma$-complete MV-algebra. In other words, $\mathcal E(H)$ can be covered by a system of $\sigma$-complete MV-algebras, and each MV-algebra satisfies RDP.  Therefore, $\mathcal E(H)$ satisfies RDP locally and not globally.

During the last period, effect algebras are intensively studied by many experts. As a guide trough the realm of effect algebras and of quantum structures we can recommend the monograph \cite{DvPu}.

Recently there appeared a noncommutative version of effect algebras, called pseudo effect algebras, \cite{DvVe1,DvVe2}, where it was not more assumed that $a+b=b+a$ holds for all $a,b.$ Also in this case it was shoved that in some important cases pseudo effect algebras are intervals in not necessarily Abelian po-groups with strong unit. This is true also for noncommutative versions of MV-algebras \cite{Dvu1}. Such a condition is a stronger version of RDP, called RDP$_1,$ which  coincides with RDP if the pseudo effect algebra is commutative = effect algebra, see \cite{DvVe1, DvVe2}, but for pseudo effect algebras they can be different.

A special class of effect algebras or pseudo effect algebras is formed by the so-called perfect algebras. That is a class of pseudo effect algebras $E$ such that every element $a\in E$ belongs either to the set of infinitesimals, that is, to the system of elements $x \in E$ such that the addition $nx= x+\cdots + x$ exists for any integer $n\ge 1,$ or it is a co-infinitesimal, i.e. it is negation of some infinitesimal.  Such effect algebras were studied in \cite{Dvu2} where it was shown that they are categorically equivalent with the category of po-groups with RDP. The class of perfect MV-algebras was described in \cite{DiLe} and the class of $n$-perfect noncommutative MV-algebras was studied in \cite{DDT}.

Recently, we have studied $n$-perfect pseudo effect algebras in \cite{DXY} for $n\ge 1.$ They allow us to decompose  pseudo effect algebras into $n+1$ comparable slices. They are non-Archimedean algebras and are connected with the so-called lexicographic product, $\mathbb Z \lex G,$ of the group of integers $\mathbb Z$ with some po-group $G.$ In that paper was not clear how is connected RDP$_1$ of $G$ with RDP$_1$ of $\mathbb Z \lex G,$ and with RDP$_1$ of  the corresponding interval pseudo effect algebra of $\mathbb Z\lex G,$ respectively. Therefore, the main aim of the present paper is to establish this connection in more details. This will allow us to simplify some important results from \cite{DXY}.

The paper is organized as follows. Section 2 presents the elements of theory of pseudo effect algebras and their connection to unital po-groups  with different types of RDP. In Section 3, we will study the lexicographic product of  po-groups with some kind of RDP. A special attention will be paid to the lexicographic product of the form $\mathbb Z \lex G.$ The final section contains an application of the properties of $\mathbb Z \lex G$ to the class of strong $n$-perfect pseudo effect algebras.

\section{Effect Algebras, Pseudo Effect Algebras, po-Groups, and the Riesz Decomposition Properties}

According to \cite{DvVe1, DvVe2}, a partial algebraic structure
$(E;+,0,1),$  where $+$ is a partial binary operation and 0 and 1 are constants, is called a {\it pseudo effect algebra},  if for all $a,b,c \in E,$ the following hold.
\begin{itemize}
\item[{\rm (PE1)}] $ a+ b$ and $(a+ b)+ c $ exist if and only if $b+ c$ and $a+( b+ c) $ exist, and in this case,
$(a+ b)+ c =a +( b+ c)$.

\item[{\rm (PE2)}] There are exactly one  $d\in E $ and exactly one $e\in E$ such
that $a+ d=e + a=1 $.

\item[{\rm (PE3)}] If $ a+ b$ exists, there are elements $d, e\in E$ such that
$a+ b=d+ a=b+ e$.

\item[{\rm (PE4)}] If $ a+ 1$ or $ 1+ a$ exists,  then $a=0$ .
\end{itemize}

If we define $a \le b$ if and only if there exists an element $c\in
E$ such that $a+c =b,$ then $\le$ is a partial ordering on $E$ such
that $0 \le a \le 1$ for any $a \in E.$ It is possible to show that
$a \le b$ if and only if $b = a+c = d+a$ for some $c,d \in E$. We
write $c = a \minusre b$ and $d = b \minusli a.$ Then

$$ (b \minusli a) + a = a + (a \minusre b) = b,
$$
and we write $a^- = 1 \minusli a$ and $a^\sim = a\minusre 1$ for any
$a \in E.$

A non-empty subset $I$ of a pseudo effect algebra $E$ is said to be an {\it ideal} if (i) $a,b\in I,$ $a+b\in E,$ then $a+b \in I,$ and (ii) if $a\le b \in I,$ then $a \in E.$ A mapping $h$ from a pseudo effect algebra $E$ into another one $F$ is said to be a {\it homomorphism} if (i) $h(1)=1,$ and (ii) if $a+b$ is defined in $E,$ so is defined $h(a)+h(b)$ and $h(a+b)= h(a)+h(b).$

If $A,B$ are two subsets of $E,$ $A+B:=\{a+b: a \in A, b \in B, a+b \in E\}.$

We recall that a {\it po-group} (= partially ordered group) is a
group $G$ endowed with a partial order, $\le,$ such that if $a\le b,$ $a,b
\in G,$ then $x+a+y \le x+b+y$ for all $x,y \in G.$  We denote by
$G^+$ the set of all positive elements of $G.$ If, in addition, $G$
is a lattice under $\le$, we call it an $\ell$-group (= lattice
ordered group). An element $u\in G^+$ is said to a {\it strong unit}
(= order unit) if given $g \in G,$ there is an integer $n\ge 1$ such
that $g \le nu,$ and the couple $(G,u)$ with a fixed strong unit is
called a {\it unital po-group.} The monographs \cite{Fuc, Gla} can
be recommended  as guides through the world of po-groups.

For example, if $(G,u)$ is a unital (not necessary Abelian) po-group
with strong unit $u$, and
$$
\Gamma(G,u) := \{g \in G: \ 0 \le g \le u\},\eqno(2.1)
$$
then $(\Gamma(G,u); +,0,u)$ is a pseudo effect algebra if we
restrict the group addition $+$ to $\Gamma(G,u).$  Every pseudo
effect algebra $E$ that is isomorphic to some $\Gamma(G,u)$ is said
to be an {\it interval pseudo effect algebra}.

For basic properties of pseudo effect algebras see \cite{DvVe1} and
\cite{DvVe2}. We recall that if $+$ is commutative, $E$ is said to
be an {\it effect algebra}; for a comprehensive overview on effect
algebras see e.g. \cite{DvPu}.

We note that if $a^-=a^\sim$ for all $a\in E,$ then it does not imply, in general, that $E$ is an effect algebra. Indeed, if $G$ is not an Abelian po-group, let $\mathbb Z\lex G$ denote the lexicographic product of the group of integer, $\mathbb Z$, with $G.$ Then $u=(1,0)$ is a strong unit, and $E=\Gamma(\mathbb Z \lex G,(1,0))$ is a pseudo effect algebra with $a^-=a^\sim$ for all $a\in E,$ but $E$ is not an effect algebra.

Let $a$ be an element of a pseudo effect algebra $E$ and $n\ge 0$ be an integer. We define
$$
0a:=0,\quad 1a:=a, \quad na:= (n-1)a +a, \ n \ge 2
$$
supposing $(n-1)a$ and $(n-1)a+a$ are defined in $E.$

A crucial effect algebra important for mathematical foundations of quantum physics is the system $\mathcal E(H)$ of all Hermitian operators of a real, complex or quaternionic Hilbert space $H$ which are between the zero $O$ and the identity operator $I$, \cite{DvPu}. It is also an interval effect algebra because if $\mathcal B(H)$ is the system of all Hermitian operators, then $I$ is a strong unit, and $\mathcal E(H)=\Gamma(\mathcal B(H),I).$

\begin{definition}\label{de:2.1}
A  po-group $G$ satisfies

\begin{enumerate}
\item[(i)]
the Riesz Interpolation Property (\RIP) if $a_1,a_2 \le b_1,b_2\in G$ implies there exists an element $c\in G$ such that $a_1,a_2 \le c \le b_1,b_2;$

\item[(i)]
the Riesz Decomposition Property (\RDP$_0$ for short) if for $a,b,c \in G^+,$ $a \le b+c$, there exist $b_1,c_1 \in G^+,$ such that $b_1\le b,$ $c_1 \le c$ and $a = b_1 +c_1;$

\item[(iii)]
the Riesz Decomposition Property (\RDP for short) if, for all $a_1,a_2,b_1,b_2 \in G^+$ such that $a_1 + a_2 = b_1+b_2,$ there are four elements $c_{11},c_{12},c_{21},c_{22}\in G^+$ such that $a_1 = c_{11}+c_{12},$ $a_2= c_{21}+c_{22},$ $b_1= c_{11} + c_{21}$ and $b_2= c_{12}+c_{22};$

\item[(iv)]
\RDP$_1$  if, for all $a_1,a_2,b_1,b_2 \in G^+$ such that $a_1 + a_2 = b_1+b_2,$ there are four elements $c_{11},c_{12},c_{21},c_{22}\in G^+$ such that $a_1 = c_{11}+c_{12},$ $a_2= c_{21}+c_{22},$ $b_1= c_{11} + c_{21}$ and $b_2= c_{12}+c_{22}$, and $0\le x\le c_{12}$ and $0\le y \le c_{21}$ imply  $x+y=y+x;$

\item[(v)]
\RDP$_2$  if, for all $a_1,a_2,b_1,b_2 \in G^+$ such that $a_1 + a_2 = b_1+b_2,$ there are four elements $c_{11},c_{12},c_{21},c_{22}\in G^+$ such that $a_1 = c_{11}+c_{12},$ $a_2= c_{21}+c_{22},$ $b_1= c_{11} + c_{21}$ and $b_2= c_{12}+c_{22}$, and $c_{12}\wedge c_{21}=0.$
\end{enumerate}
\end{definition}

If, for $a,b \in G^+,$ we have for all $0\le x \le a$ and $0\le b\le b,$ $x+y=y+x,$ we denote this property by $a\, \mbox{\rm \bf com}\, b.$

The RDP will be denote by the following table:

$$
\begin{array}{ccc}
a_1  & c_{11} & c_{12}\\
a_{2}  & c_{21} & c_{22}\\
             &b_{1} & b_{2}
\end{array}.
$$

By \cite[Prop 4.2]{DvVe1} for directed po-groups, we have
$$
\RDP_2 \quad \Rightarrow \RDP_1 \quad \Rightarrow \RDP \quad \Rightarrow \RDP_0 \quad \Leftrightarrow \quad  \RIP, \eqno(2.2)
$$
but the converse implications do not hold, in general. The equivalence of RDP$_0$ and RIP was established in \cite[Thm 2.2]{Fuc1}. A po-group $G$ satisfies \RDP$_2$ iff $G$ is an $\ell$-group, \cite[Prop 4.2(ii)]{DvVe1}.

If $G$ is an Abelian po-group, then
$$
\RDP_2 \quad \Rightarrow \RDP_1 \quad \Leftrightarrow \RDP \quad \Leftrightarrow \RDP_0 \quad \Leftrightarrow \quad  \RIP. \eqno(2.3)
$$

We define also the {\it Strict Riesz Interpolation Property} (SRIP for short) on a po-group $G$: given any $a_1,a_2,b_1,b_2\in G$ satisfying $a_1,a_2 < b_1,b_2,$ there exists $c \in G$ such that $a_1,a_2 < c < b_1,b_2.$

For example, the $\ell$-groups of real numbers $\mathbb R$ and rational numbers $\mathbb Q$ satisfy SRIP, but $\mathbb Z$ not.

In the same way as for po-groups, we can define analogous  Riesz Decomposition Properties, when we change a po-group $G$ and $G^+$ to a pseudo effect algebra $E.$

We see that RDP means roughly speaking a weak form of the distributivity and it allows a joint refinement of two decompositions.

The importance of RDP$_1$ follows from the crucial results by \cite{DvVe1, DvVe2} which says that if $E$ is a pseudo effect algebra with RDP$_1$, then there is a unique (up to isomorphism of unital po-groups) unital po-group $(G,u)$ with RDP$_1$ (not necessarily Abelian) such that $E \cong \Gamma(G,u).$ In addition, the category  of pseudo effect algebras with
RDP$_1$ and the category of unital po-groups with RDP$_1$
(not necessarily Abelian) are categorically equivalent. If $E$ is a pseudo effect algebra with RDP$_2,$ then $E$ is isomorphic to $\Gamma(G,u),$ where $G$ is an $\ell$-group, \cite{Dvu1, DvVe2}.

We say that (i) a po-group $G$ is {\it directed} if given $g_1,g_2 \in G$ there is $g \in G$ such that $g_1,g_2 \le g;$ (ii)  a group $G$ is {\it torsion-free} if $ng\ne 0$ for any $g\ne 0$ and every nonzero integer $n.$  For example, every $\ell$-group is torsion-free, see \cite[Cor 2.1.3]{Gla}. We recall that  a po-group $G$ is a torsion-free iff so is $\mathbb Z \lex G.$

\section{Lexicographic Product of po-groups and RDP's}

In this section, we study the lexicographic product of pseudo effect algebras in relationship  to the Riesz Decomposition Properties. We show in particular that if $G$ is a directed po-group with RIP and $E=\Gamma(\mathbb Z\lex G,(1,0))$ has  RDP or RDP$_1,$ then also the whole group $\mathbb Z\lex G$ has the same property RDP or RDP$_1.$

Let $\{G_i: i \in I\}$ be a system of po-groups indexed by a well-ordered index set $I.$ We define a {\it lexicographic order} $\le$ on the direct product $G=\prod_{i\in I} G_i$ which is given as follows:
for two different elements $a=(a_i)$ and $ b= (b_i),$  we have $a \le b$ iff $a_j < b_j,$ where $j$ is the least element of the set $\{i\in I: a_i\ne b_i\}.$  The po-group $G$ equipped with this lexicographic order is said to be the {\it lexicographic product} of the system $\{G_i: i \in I\}$, and we will denote it by $G = \lex_{i\in I}G_i.$

Let $G$ be a po-group. Since $\{0\} \lex G \cong G\cong G \lex \{0\},$ then $\{0\}\lex G$ and $G\lex \{0\}$ are directed iff $G$ is directed. If $G_1$ and $G_2$ are nontrivial po-groups, then $G_1 \lex G_2$ is directed iff so is $G_1$, \cite[p. 26 (c)]{Fuc}.

\begin{proposition}\label{pr:2.2} Let $\{G_i: i\in I\}$ be a family of po-groups and let $G =\lex_{i\in I} G_i.$

{\rm (i)} Let every $G_i$ satisfy \RIP and assume that every $G_i$ satisfies also {\rm SRIP} or that $G_j$ is directed for all $j>i.$ Then $G$ satisfies \RIP.

{\rm (ii)} $G$ satisfies \RDP$_2$ if and only if all $G_i$ are linearly ordered except the last one, if such exists, which may be an arbitrary po-group satisfying \RDP$_2.$
\end{proposition}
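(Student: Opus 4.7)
\emph{Part (i).} The plan is to reduce the general assertion to a two-factor case by transfinite induction on the well-ordered index set $I$. Write $G = G_{i_0}\lex H$ with $i_0 = \min I$ and $H = \lex_{j>i_0} G_j$; one first checks that $H$ inherits both \RIP\ and the disjunctive clause of the hypothesis (SRIP of each factor, or directedness of all strictly later factors), so by induction $H$ satisfies \RIP. The remaining task is to prove: if $A$ and $B$ are po-groups with \RIP\ and either $A$ has SRIP or $B$ is directed, then $A\lex B$ has \RIP.

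For the two-factor step, let $a_k=(a_{k,1},a_{k,2})$ and $b_l=(b_{l,1},b_{l,2})$, $k,l\in\{1,2\}$, satisfy $a_k\le b_l$ in lex. The lex order forces $a_{k,1}\le b_{l,1}$ in $A$, so \RIP\ in $A$ furnishes $c_1$ with $a_{k,1}\le c_1\le b_{l,1}$. Put $K=\{k:a_{k,1}=c_1\}$ and $L=\{l:c_1=b_{l,1}\}$; for $k\in K,\ l\in L$ the lex inequality forces $a_{k,2}\le b_{l,2}$. It then suffices to produce $c_2\in B$ with $a_{k,2}\le c_2$ for $k\in K$ and $c_2\le b_{l,2}$ for $l\in L$, for then $c=(c_1,c_2)$ interpolates. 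If $K,L$ are both nonempty, \RIP\ in $B$ supplies $c_2$; if exactly one of $K,L$ is nonempty, directedness of $B$ (upward, or after negation downward) supplies a one-sided bound. The remaining configuration---a one-sided coincidence with $B$ not directed---is blocked using SRIP of $A$: if every first-coordinate inequality is strict, SRIP produces $c_1$ strictly between the $a_{k,1}$'s and the $b_{l,1}$'s, so $K=L=\emptyset$ and $c_2$ may be chosen arbitrarily; otherwise an equality $a_{k_0,1}=b_{l_0,1}$ pins $c_1$ to this common value, which places both $K$ and $L$ in the nonempty regime where \RIP\ of $B$ applies.

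\emph{Part (ii).} The plan rests on the fact (recorded after Definition~\ref{de:2.1}) that a po-group satisfies \RDP$_2$ iff it is an $\ell$-group, so (ii) becomes: $G$ is an $\ell$-group iff every $G_i$ except possibly a last one is linearly ordered and the last (if it exists) is an $\ell$-group. For the forward direction, take $j_0\in I$ that is not the last index and assume, for contradiction, that $G_{j_0}$ has incomparable elements $a,b$. Let $\hat a,\hat b\in G$ agree with $a,b$ in coordinate $j_0$ and vanish elsewhere; they are lex-incomparable in $G$. Any upper bound of $\{\hat a,\hat b\}$ must be zero on coordinates $<j_0$, strictly exceed both $a$ and $b$ at $j_0$, and can take any value in a later nontrivial factor $G_k$, $k>j_0$; since the $k$-th coordinate of an upper bound can be decreased without limit inside $G_k$, no least upper bound exists, contradicting the $\ell$-group property. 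For the reverse direction, when each earlier $G_i$ is linearly ordered, the first coordinate where two elements of $G$ differ is determined and totally ordered, so lex joins and meets reduce either to the maximum/minimum at that coordinate or, when two elements agree on every non-last coordinate, to the $\ell$-group operations in the last factor.

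The principal obstacle is the fine case analysis in part (i): matching the disjunctive hypothesis with the four possible configurations of $K$ and $L$, and checking that the chosen disjunct descends to the tail $H$ so that the transfinite induction closes.
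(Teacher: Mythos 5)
The paper's own proof of this proposition is citation only: part (i) is declared ``identical with the proof of \cite[Prop 2.10]{Goo}'' and part (ii) is deduced from the equivalence of \RDP$_2$ with the $\ell$-group property together with \cite[p.~26(d)]{Fuc}. Your two-factor interpolation lemma is correct and is essentially Goodearl's argument: the bookkeeping with $K$ and $L$, and the use of SRIP of $A$ to force either $K=L=\emptyset$ or both $K,L$ nonempty, is exactly the right dichotomy. The gap is in the reduction of the general statement to this lemma. The index set $I$ is an arbitrary well-ordered set, and your transfinite induction peels off the least index and recurses on the tail $H=\lex_{j>i_0}G_j$. This terminates only when $I$ is finite: for any infinite order type $\alpha$ one has $1+\alpha=\alpha$, so the tail has the same order type as $I$ and the recursion never reaches a base case; moreover at limit order types there is no decomposition of $G$ as a first factor lex a tail of strictly smaller type. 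So as written you have proved (i) only for finite $I$ (which is, admittedly, all the paper uses afterwards, in Proposition \ref{pr:2.3}). The general case requires a direct construction of the interpolant organized around the least coordinate at which the given elements disagree, as in Goodearl's proof.

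In part (ii) the forward direction contains a false intermediate claim: an upper bound of $\hat a,\hat b$ need not vanish on coordinates $<j_0$ --- any element that is strictly positive at some nontrivial earlier coordinate is automatically an upper bound, whatever its later coordinates are. The idea is salvageable, because a least upper bound would have to lie below the particular upper bounds $(0,\dots,0,c,t,\dots)$ that you exhibit, so you may legitimately restrict attention to that family and then drive the coordinate $t$ downward; but the claim as stated is wrong and should be replaced by this weaker, sufficient observation. You also tacitly assume that some factor strictly after $j_0$ is nontrivial and that $a,b$ admit a strict common upper bound in $G_{j_0}$; both need a word (if $a,b$ have no common upper bound in $G_{j_0}$ and all earlier factors are trivial, then $\hat a,\hat b$ have no upper bound at all, contradicting directedness of an $\ell$-group, while the case in which all later factors are trivial must be excluded or absorbed into the reading of ``the last one''). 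The reverse direction of (ii) is fine.
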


\begin{proof}
(i)  The proof of this part is identical with the proof of \cite[Prop 2.10]{Goo}.

(ii) According to \cite[Prop 4.2(ii)]{DvVe1}, a po-group $G$ satisfies \RDP$_2$ iff $G$ is an $\ell$-group. Due to the characterization \cite[p. 26 (d)]{Fuc}, we obtain the statement in question.
\end{proof}

\begin{proposition}\label{pr:2.3} Let $G_1,G_2$ be po-groups and $G = G_1 \lex G_2.$

{\rm (1)} $G$ satisfies  \RIP  if and only if the following properties hold:
\begin{enumerate}

\item[{\rm (a)}] Both $G_1$ and $G_2$ satisfy the RIP

\item[{\rm (b)}] Either $G_1$ satisfies {\rm SRIP} or $G_2$ is directed.

\end{enumerate}

{\rm (2)} If $G$ satisfies one of the properties \RDP or \RDP$_1,$ then  the following properties hold:
\begin{enumerate}

\item[{\rm (a)}] Both $G_1$ and $G_2$ satisfy the \RDP or \RDP$_1$.

\item[{\rm (b)}] Either $G_1$ satisfies {\rm SRIP} or $G_2$ is directed.
\end{enumerate}

{\rm (3)} $G$ satisfies \RDP$_2$ if and only if $G_1$ is linearly ordered and $G_2$ satisfies \RDP$_2$.
\end{proposition}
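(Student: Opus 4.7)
Parts (1)$\Leftarrow$ and (3) are essentially immediate corollaries of Proposition~\ref{pr:2.2} with $I=\{1,2\}$: (3) is exactly Proposition~\ref{pr:2.2}(ii) in this case, and the backward direction of (1) follows from Proposition~\ref{pr:2.2}(i). The substantive content is therefore the forward direction of (1) and the claims in (2).

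For the forward direction of (1)(a) I would use the natural order embeddings $\iota_1\colon G_1\to G$, $g\mapsto(g,0)$, and $\iota_2\colon G_2\to G$, $g\mapsto(0,g)$. Given an \RIP-problem $a_1,a_2\le b_1,b_2$ in $G_1$, lift it to $(a_i,0)\le(b_j,0)$ in $G$, apply \RIP in $G$, and project the first coordinate of an interpolant $(c,z)$ onto $G_1$; the lexicographic order immediately gives $a_i\le c\le b_j$ in $G_1$. For $G_2$ the order is stricter still: any interpolant of $(0,a_i)\le(0,b_j)$ in $G$ must have first coordinate zero (since the first coordinate cannot be both $>0$ and $<0$), and then its second coordinate is the desired interpolant in $G_2$. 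For (1)(b) I would argue by contrapositive: assume $G_2$ is not directed and pick $u\in G_2$ with $\{0,u\}$ having no upper bound in $G_2$; equivalently, $\{0,-u\}$ has no lower bound. Then given $a_1,a_2<b_1,b_2$ in $G_1$, I would apply \RIP in $G$ to a quadruple engineered so that every interpolant $(c,z)$ has strict first coordinate: lower bounds with second coordinates chosen from $\{0,u\}$ force $c=a_i$ to imply $z$ is an upper bound of $\{0,u\}$, contradicting the choice of $u$, and dually upper bounds with second coordinates from $\{0,-u\}$ rule out $c=b_j$. The exact positioning of these second coordinates depends on the comparability pattern of the $a_i$ and $b_j$; in particular, the degenerate cases $a_1=a_2$ or $b_1=b_2$ are handled by doubling up the critical second coordinate on the appropriate side.

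For (2)(a) I would apply the same embedding strategy. Given $a_1+a_2=b_1+b_2$ in $G_i^+$, lift to $G^+$ via $\iota_i$ and apply \RDP (resp. \RDP$_1$) in $G$ to obtain a $2\times 2$ matrix $(c_{jk})\subset G^+$, writing $c_{jk}=(c'_{jk},z_{jk})$. For $G_1$ the first coordinates $c'_{jk}\ge 0$ directly furnish the required matrix; for $G_2$, positivity combined with the relations $c'_{11}+c'_{12}=0$, etc., forces each $c'_{jk}=0$, so $c_{jk}=(0,z_{jk})$ and the second coordinates $z_{jk}\ge 0$ yield the matrix in $G_2^+$. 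For the commutativity clause of \RDP$_1$, given $0\le x'\le c'_{12}$ and $0\le y'\le c'_{21}$ in $G_i$, the trivial case $x'=0$ or $y'=0$ is immediate, and otherwise I choose lifts $x=(x',z_{12})$ and $y=(y',z_{21})$ in the $G_1$ case (or $x=(0,x')$ and $y=(0,y')$ in the $G_2$ case); in each case $x\in[0,c_{12}]_G$ and $y\in[0,c_{21}]_G$, so the commutation $x+y=y+x$ in $G$ projects coordinate-wise to $x'+y'=y'+x'$ in $G_i$. Part (2)(b) is then immediate from (1)(b) via the chain $(2.2)$: \RDP implies \RIP in $G$, so the same dichotomy must hold.

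The main obstacle is the forward direction of (1)(b), where the delicate task is to select the quadruple of lifted bounds so that both $c=a_i$ and $c=b_j$ are ruled out for every $i,j$. This must be accomplished using only two lower and two upper bounds, exploiting the non-directedness of $G_2$ symmetrically on each side and dispatching a small case analysis according to which of the $a_i$ or $b_j$ coincide. Once (1)(b) is in hand, the remaining parts amount to routine projection arguments through $\iota_i$ combined with the known implications $(2.2)$ among the Riesz properties.
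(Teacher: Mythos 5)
Your overall strategy---lift through the coordinate embeddings, apply the hypothesis in $G$, project back---coincides with the paper's for (1)(a), (2)(a), (3) and the easy direction of (1), and those parts are correct: the paper handles $G_2$ by noting that $\{0\}\times G_2$ is a convex subgroup inheriting \RIP\ (resp.\ \RDP, \RDP$_1$), which is just a repackaging of your projection argument, and your treatment of the commutation clause of \RDP$_1$ via the lifts $(x',z_{12})$, $(y',z_{21})$ is exactly the paper's (the paper's $-d$ is your $z_{12}=z_{21}$), including the need to dispose of $x'=0$ separately. The genuine divergence is in the forward direction of (1)(b), and here your plan has a gap. The paper sidesteps your ``two lower, two upper bounds'' constraint entirely: it picks $u_1,u_2\in G_2$ with no common upper bound and interpolates the \emph{four} lower bounds $(a_i,u_k)$ against the \emph{four} upper bounds $(b_j,-u_k)$ (pairwise \RIP\ extends to finite families); then $z=a_i$ forces $u_1,u_2\le w$ and $z=b_j$ forces $w\le -u_1,-u_2$, both impossible, with no case analysis. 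Your $2{+}2$ scheme can be repaired, but the case split you announce is the wrong one: doubling up the second coordinate is needed not only when $a_1=a_2$ but whenever $a_1$ and $a_2$ are \emph{comparable}. If $a_1<a_2$ with $a_1\ne a_2$, the lower bound $(a_1,s_1)$ imposes no constraint on an interpolant with first coordinate $a_2$, so the single condition $s_2\le z$ cannot contradict non-directedness; you must discard $a_1$ and use both $(a_2,0)$ and $(a_2,u)$. Only when $a_1,a_2$ are incomparable is $c=a_i$ automatically excluded. Finally, you deduce (2)(b) from (1)(b) via the chain (2.2); note that (2.2) is stated in the paper only for \emph{directed} po-groups, and $G_1\lex G_2$ need not be directed (the paper's own proof of (2) in fact silently omits clause (b)), so this step deserves a word of justification---e.g.\ that \RDP$\Rightarrow$\RDP$_0\Rightarrow$\RIP\ holds without directedness, or a direct \RDP-based version of the non-directedness argument.
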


\begin{proof}
(1) \RIP: If (a) and (b) holds, then $G$ has RIP by Proposition \ref{pr:2.2}. Conversely, let $G$  satisfies \RIP. Since $\{0\}\times G_2$ is a convex subgroup of $G,$ it also satisfies \RIP. But $\{0\}\times G_2$ is isomorphic as an ordered po-group with $ G_2,$ so that $G_2$ satisfies RIP.

Let now $a_1,a_2, b_1, b_2\in G_1$ satisfy $a_i \le b_j$ for all $i,j.$ Then $(a_1,0),(a_2,0)\le (b_1,0),(b_2,0).$ The RIP holding for $G$ entails that there is an element $(c,d) \in G$ such that $(a_i,0)\le (c,d)\le (b_j,0)$ for all $i,j.$ Hence, $a_i\le c\le b_j$ for all $i,j$ which yields $G_1$ satisfies \RIP.

Now we show that if $G_2$ is not directed, then $G_1$ satisfies SRIP.  Thus let $u_1$ and $u_2$ be two elements which have no common upper bound in $G_2.$ Let $a_1,a_2,b_1,b_2\in G_1$ satisfy $a_1,a_2 < b_1,b_2.$ Then $(a_i,u_k)<(b_j,-u_k)$ for all $i,j,k.$ Therefore, there is an element $(z,w)\in G$ such that
$$
(a_1,u_k),(a_2,u_k)\le (z,w)\le (b_1,-u_k),(b_2,-u_k)
$$
for each $k.$ If $a_1=z,$ then $(a_1,u_k)\le (a_1,w)$ for each $k$ and so each $u_k \le w$ which is impossible. Consequently, $a_1 <z$ and similarly, $a_2 <z$. If $z=b_1$, then $(b_1,w)\le(b_1,-u_k)$ for each $k$ which entails $w \le -u_k$ and $u_k \le -w$ which again gives a contradiction. Thus $a_1,a_2 <x<b_1,b_2$ proving that $G_1$ satisfies SRIP.

(2) \RDP: Assume that $G$ satisfies RDP. Similarly as in the case RIP, $\{0\}\times G_2$ is a convex subgroup of $G$ satisfying RDP. Since $\{0\}\times G_2 \cong G_2$ as po-groups, $G_2$ satisfies RDP.

Given $a_1,a_2,b_1, b_2 \in G_1^+$ satisfying $a_1+a_2=b_1 + b_2,$ we have $ (a_1,0)+ (a_2,0)=(b_1,0) + (b_2,0).$ The RDP holding in $G$ implies that there are four elements $(c_{11},x_{11}),$ $(c_{12},x_{12}),(c_{21},x_{21}), (c_{22},x_{22})\in G^+$ such that $(a_1,0) = (c_{11},x_{11})+(c_{12},x_{22}),$ $(a_2,0)= (c_{21},x_{21})+(c_{22},x_{22}),$ $(b_1,0)= (c_{11},x_{11}) + (c_{21},x_{21})$ and $(b_2,0)= (c_{12},x_{12})+(c_{22},x_{22}).$ We have that all $c_{ij}$'s belong to $G_1^+.$ Therefore,  $a_1 = c_{11}+c_{12},$ $a_2= c_{21}+c_{22},$ $b_1= c_{11} + c_{21}$ and $b_2= c_{12}+c_{22},$ which yields that $G_1$ satisfies RDP.

RDP$_1$: Assume that $G$ satisfies RDP$_1$. Similarly as in the case for RDP, we can show that $\{0\}\times G_2$ satisfies RDP$_1,$ and since $\{0\} \times G_2 \cong G_2,$ $G_2$ is with RDP$_1.$

Given $a_1,a_2,b_1, b_2 \in G_1$ satisfying $a_1+a_2=b_1 + b_2,$ we have $ (a_1,0)+ (a_2,0)=(b_1,0) + (b_2,0).$ As for RDP, we  found four elements
$(c_{11},x_{11}),(c_{12},x_{12}),(c_{21},x_{21}),$ $(c_{22},x_{22})\in G^+$ such that $(a_1,0) = (c_{11},x_{11})+(c_{12},x_{22}),$ $(a_2,0)= (c_{21},x_{21})+(c_{22},x_{22}),$ $(b_1,0)= (c_{11},x_{11}) + (c_{21},x_{21})$ and $(b_2,0)= (c_{12},x_{12})+(c_{22},x_{22}).$ We have that all $c_{ij}$'s belong to $G_1^+.$ Therefore,  $a_1 = c_{11}+c_{12},$ $a_2= c_{21}+c_{22},$ $b_1= c_{11} + c_{21}$ and $b_2= c_{12}+c_{22}.$ In addition, $x_{11}=-x_{12}= -x_{21}=x_{22}= d$ for some element $d \in G_2.$
Assume that for $x,y \in G^+_1,$ we have $x\le c_{12}$ and $x\le c_{21}.$ Then $(x,-d)\le (c_{12},-d)$ and $(y,-d)\le (c_{21},-d)$ which yields $(x,-d)+(y,-d)=(y,-d)+(x,-d)$ and $x+y=y+x$ proving $G_1$ satisfies RDP$_1.$

(3) It follows from \cite[(d) page 26]{Fuc}.
\end{proof}

We note that according to \cite[Cor 2.12]{Goo}, for Abelian po-groups, the conditions (a)--(b) are also sufficient in order to  $G=G_1 \lex G_2$ satisfy RDP while in this case RIP = RDP = RDP$_1$. We do not know the answer for po-groups which are not Abelian. In the rest  of this section, we exhibit the case when $G_1 = \mathbb Z,$ and show some positive  answers, Theorems \ref{th:2.8} and \ref{th:2.11}. On the other hand,  according to \cite[Ex 2.13]{Goo}, there is an Abelian po-group $G$ with RIP such that the po-group $G\lex G$ does not satisfy RIP.

\begin{proposition}\label{pr:3.1}
Let $G$ be a directed po-group satisfying \RDP or {\rm RDP}$_1,$ respectively. Then the pseudo effect algebra $\Gamma(\mathbb Z \lex G,(1,0))$ satisfies \RDP or {\rm RDP}$_1,$ respectively.
\end{proposition}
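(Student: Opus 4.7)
The plan is to verify RDP (respectively RDP$_1$) for $E := \Gamma(\mathbb Z \lex G,(1,0))$ by a case analysis on how the four elements $a_1,a_2,b_1,b_2$ of the refinement hypothesis sit in the two natural slices of $E$. Every element of $E$ is either of the form $(0,g)$ with $g \in G^+$ (an \emph{infinitesimal}, forming the slice $E_0$) or of the form $(1,-h)$ with $h \in G^+$ (a \emph{co-infinitesimal}, forming $E_1$); two co-infinitesimals cannot be summed in $E$, so whenever $a_1+a_2$ exists in $E$ at most one of $a_1,a_2$ lies in $E_1$, and likewise for $b_1,b_2$.

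If all four elements lie in $E_0$ -- written $a_i=(0,x_i)$, $b_j=(0,y_j)$ -- the equation reduces to $x_1+x_2=y_1+y_2$ in $G^+$, and RDP (resp.\ RDP$_1$) in $G$ supplies a refinement $c_{ij}\in G^+$ whose lift $(0,c_{ij})$ refines in $E$. The commutativity clause of RDP$_1$ transfers directly: elements of $E$ below $(0,c_{12})$ or $(0,c_{21})$ are automatically in $E_0$, and their commutativity in $E$ coincides with commutativity of the $G$-coordinates.

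The substantive case is when the common sum lies in $E_1$; by the symmetry of the refinement matrix under transposing rows or columns, I may assume $a_1,b_1 \in E_1$ and $a_2,b_2 \in E_0$, and write $a_1=(1,-p)$, $b_1=(1,-r)$, $a_2=(0,q)$, $b_2=(0,s)$ with $p,q,r,s\in G^+$, $q\le p$, $s\le r$. The equation $a_1+a_2=b_1+b_2$ unpacks to $-p+q=-r+s$ in $G$; setting $t:=-q+p=-s+r\in G^+$ rewrites this as $p=q+t$ and $r=s+t$. One seeks $c_{11}=(1,-d)\in E_1$ and $c_{12}=(0,e)$, $c_{21}=(0,f)$, $c_{22}=(0,g')\in E_0$ satisfying the $G^+$-system $d=e+p=f+r$, $e+g'=s$, and $f+g'=q$, which reduces (after cancelling the common $t$) to finding $g'\in G^+$ with $g'\le q,s$ and such that $(s-g')+q = (q-g')+s$ holds in $G$.

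To solve this system I would transform $-p+q=-r+s$ into a $G^+$-equation to which RDP (or RDP$_1$) in $G$ applies directly; right-multiplication by $p$ yields $r+q^p = s+p$ in $G^+$, where $q^p := -p+q+p\in G^+$ is the conjugate of $q$ by $p$. Applying RDP (resp.\ RDP$_1$) in $G$ to this equation produces entries which, combined with $t$, can be assembled into $d,e,f,g'\in G^+$ solving the system, and the commutativity clause of RDP$_1$ in $G$ then lifts to the commutativity of intervals in $E$ below $c_{12}=(0,e)$ and $c_{21}=(0,f)$. The main obstacle is precisely this substantive case: in the Abelian setting one has the cleaner identity $p+s=q+r$ in $G^+$ and a single application of RDP delivers everything, but in the noncommutative setting conjugates are unavoidable, and one must carefully verify both the positivity of the assembled components ($d\ge p,r$; $e,f,g'\ge 0$) and the fact that the RDP$_1$ commutativity in $G$ translates into the correct commutativity clause in $E$.
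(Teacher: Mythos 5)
Your overall strategy (stratifying $E=\Gamma(\mathbb Z\lex G,(1,0))$ into the infinitesimal slice $E_0$ and the co-infinitesimal slice $E_1$ and lifting the refinement property from $G$ case by case) is the same as the paper's, and your all-$E_0$ case is correct. But there are two genuine gaps. First, the symmetry reduction is not available: in a noncommutative po-group you cannot transpose a single row or column of the refinement problem, since $a_1+a_2\neq a_2+a_1$ in general; the only legitimate symmetry is interchanging the roles of $\{a_i\}$ and $\{b_j\}$ (transposing the whole matrix). Consequently the mixed case $(1,a_1)+(0,a_2)=(0,b_1)+(1,b_2)$ is not covered, and it cannot be brought into the form you treat: since $b_1\in E_0$ forces $c_{11},c_{21}\in E_0$ and $a_2\in E_0$ forces $c_{22}\in E_0$, the unique $E_1$-entry of any refinement must be $c_{12}$, not $c_{11}$, so your ansatz $c_{11}=(1,-d)$ with $c_{12},c_{21},c_{22}\in E_0$ provably has no solution there. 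This case needs a separate argument; it is in fact the easy one, solved by the explicit table $c_{11}=(0,b_1)$, $c_{12}=(1,-b_1+a_1)$, $c_{21}=(0,0)$, $c_{22}=(0,a_2)$ (one checks $-b_1+a_1=b_2-a_2\le 0$), and the zero entry makes the RDP$_1$ clause trivial. The case with both co-infinitesimals in the second position likewise needs its own mirror treatment.

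Second, in the case you do treat you stop exactly where the work begins. You correctly reduce to solving $d=e+p=f+r$, $e+g'=s$, $f+g'=q$ in $G^+$, but the proposed route --- conjugating to $r+q^p=s+p$ and ``assembling'' its RDP entries with $t$ --- is never carried out, and it is not evident that the required identity $(s-g')+q=(q-g')+s$ drops out of that refinement. The direct route, which is the paper's, is to observe that the second coordinates $-p$ and $-r$ are both $\le 0$, choose a common lower bound $d_0\le -p,-r$ (by directedness, or simply $d_0=-p-r$), left-translate $-p+q=-r+s$ to $(-d_0-p)+q=(-d_0-r)+s$, which is an equation between sums of elements of $G^+$, apply RDP (resp.\ RDP$_1$) of $G$ there, and absorb $d_0$ back into the $c_{11}$-entry. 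The off-diagonal entries of the resulting table then lie in $\{0\}\times G^+$, so the commutativity clause of RDP$_1$ transfers verbatim from $G$ to $E$ --- a verification your sketch also leaves open.
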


\begin{proof}
Define the pseudo effect algebra $E = \Gamma(\mathbb Z \lex G, (1,0)).$ Then $a \in E$ iff $a= (0,g)$ or $a=(1,-g),$ where $g \in G^+$. If $(0,g_1)+(0,g_2)=(0,h_1)+ (0,h_2)$ for some $g_1,g_2,h_1,h_2 \in G^+,$ then RDP or RDP$_1$ holds.

(i) Assume that the case $(1,a_{1}) + (0,a_{2})=(1,b_{1})+(0,b_{2})$. Then $a_{2},b_{2} \in G^+.$

Let $(1,a_1)+(0,a_2)=(1,b_1)+(0,b_2).$ Then $a_2,b_2 \ge 0$ and $a_1,b_1\le 0.$  Let $d \le a_1,b_1.$  Then there exist $e_{11},e_{12},e_{21},e_{22} \in G^+$ such that

$$
\begin{array}{ccc}
-d+ a_{1}  &  e_{11} & e_{12}\\
a_{2}   & e_{21} & e_{22} \\
           &-d+ b_{1} & b_{2}
\end{array},
$$
which yields

$$
\begin{array}{ccc}
a_{1}  & d+ e_{11} & e_{12}\\
a_{2}   & e_{21} & e_{22} \\
           &b_{1} & b_{2}
\end{array},
$$
and

$$
\begin{array}{ccc}
(1,a_{1})  & (1,d+ e_{11}) & (0,e_{12})\\
(0,a_{2})   & (0,e_{21}) & (0,e_{22}) \\
           &(1,b_{1}) & (0,b_{2})
\end{array}.
$$
This entails RDP as well as RDP$_1$ for this case if $G$ does satisfy RDP or RDP$_1,$ respectively.

In the same way we prove the case $(0,a_1)+(1,a_2)=(0,b_1)+(1,b_2).$

(ii) The case $(1,a_1)+(0,a_2)=(0,b_1)+(1,b_2).$  Then $a_1,b_2\le 0$ and $a_2,b_1 \ge 0$ and $-(0,b_1)+ (1,a_1)= (1,b_2)-(0,a_2)$ which entails $-b_1 + a_1 = b_2 - a_2\le 0$ and
the RDP decomposition in question

$$
\begin{array}{ccc}
(1,a_{1})  & (0,b_1) & (1,-b_1+a_1)\\
(0,a_{2})   & (0,0) & (0,a_2) \\
           &(0,b_{1}) & (1,b_{2})
\end{array}.
$$
Hence RDP holds. If now $(0,0)\le (x_1,x_2)\le (0,0)$ and $(0,0)\le (y_1,y_2)\le (1,-b_1+a_1),$ then $(x_1,x_2)=(0,0)$ so that trivially $(x_1,x_2)+(y_1,y_2)=(y_1,y_2) + (x_1,x_2)$ and RDP$_1$ holds  for this case, too.

Finally, the last case $(0,a_1)+(1,a_2)=(1,b_1)+(0,b_2)$ follows from the equality $(1,b_1)+(0,b_2)=(0,a_1)+(1,a_2)$ and the latter case.

Combining all cases, we see that the pseudo effect algebra $E=\Gamma(\mathbb Z \lex G,(1,0))$ satisfies RDP or RDP$_1,$ respectively, whenever the po-group $G$ does  RDP or RDP$_1,$ respectively.
\end{proof}

Let $m,n\ge1$ be two integers. We say that a pseudo effect algebra $E$ satisfies the $(m,n)$-RDP  if $a_1+\cdots +a_m = b_1+\cdots+b_n$ implies that there is a system of elements $\{c_{ij}: 1\le i\le m,\ 1\le j \le n\}$ such that $a_i=\sum_{j=1}^n c_{ij}$ and $b_j = \sum_{i=1}^m c_{ij}$ for all $1\le i\le m,\ 1\le j \le n.$ If the elements $\{c_{ij}\}$ can be chosen in such a way that
and such that for $1 \leq i < m$, $\,1 \leq j < n$ we have

$$
c_{i+1,j} + \cdots + c_{mj} \,\mbox{\rm \bf com}\, c_{i,j+1} + \cdots + c_{in}, \eqno(3.1)
$$
we say that $E$ satisfies the $(m,n)$-RDP$_1.$
In the same way we introduce the $(m,n)$-RDP and the $(m,n)$-RDP$_1$ for a po-group $G$ assuming $a_1,\ldots,a_m,b_1,\ldots,b_n, c_{ij}\in G^+.$

If for two particular finite sequences $\{a_i\}_{i=1}^m$ and $\{b_j\}_{j=1}^n$ in $E$ or $G^+$ with $a_1+\cdots+a_m=b_1+\cdots + b_n$ we can find a finite system $\{c_{ij}\}$ in $E$ or $G^+,$ respectively, such that $a_i=\sum_{j=1}^n c_{ij}$ and $b_j = \sum_{i=1}^m c_{ij}$ for all $1\le i\le m,\ 1\le j \le n,$ we say that for $\{a_i\}$ and $\{b_j\}$ the RDP, $(m,n)$-RDP or RDP$_1,$ $(m,n)$-RDP$_1$ (if also (3.1) is true) holds.

In \cite[Lem 3.9]{DvVe1}, it was shown that RDP$_1$ holds in a pseudo effect algebra $E$ (in a po-group $G)$ iff $(m,n)$-RDP$_1$ holds for all integers $m,n\ge 1.$ In the same way, see also the proof of Proposition \ref{pr:2.6}, we can prove:

\begin{proposition}\label{pr:2.5}
The $\RDP$ holds in a pseudo effect algebra $E$ (in a po-group $G$) if and only if $(m,n)$-$\RDP$ holds for all integers $m,n\ge 1.$
\end{proposition}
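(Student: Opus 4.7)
The ``only if'' direction is trivial: taking $m = n = 2$ in $(m,n)$-RDP recovers the RDP. For the converse, my plan is to deduce $(m,n)$-RDP from RDP by induction on the sum $m + n$, exactly in the spirit of the proof of \cite[Lem 3.9]{DvVe1} for RDP$_1$. I will write the argument for a po-group $G$ with $a_i,b_j\in G^+$; the pseudo effect algebra case transfers directly, since in $E$ a defined sum of several elements automatically yields defined partial sums by (PE1).

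The base cases are immediate: if $m = 1$, set $c_{1j} := b_j$, and if $n = 1$, set $c_{i1} := a_i$; the case $m = n = 2$ is RDP by definition. For the inductive step, fix $m, n \ge 2$ with $m + n \ge 5$, assume the statement for all pairs $(m',n')$ with $m' + n' < m+n$, and, without loss of generality, suppose $m \ge 3$ (the case $n \ge 3$ is handled dually by splitting off $b_n$ instead of $a_m$ below). Given $a_1 + \cdots + a_m = b_1 + \cdots + b_n$, put $a := a_1 + \cdots + a_{m-1}$, so that
\[
a + a_m = b_1 + \cdots + b_n.
\]
By the inductive $(2,n)$-RDP (valid since $2 + n < m + n$), there exist $d_1,\dots,d_n,e_1,\dots,e_n\in G^+$ with $a = d_1 + \cdots + d_n$, $a_m = e_1 + \cdots + e_n$, and $b_j = d_j + e_j$ for each $j$. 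Next, the inductive $(m-1,n)$-RDP applied to $a_1 + \cdots + a_{m-1} = d_1 + \cdots + d_n$ yields $c_{ij}\in G^+$, $1 \le i \le m-1$, $1 \le j \le n$, with $a_i = \sum_{j=1}^n c_{ij}$ and $d_j = \sum_{i=1}^{m-1} c_{ij}$. Setting $c_{mj} := e_j$ completes the construction, since $a_m = \sum_{j=1}^n c_{mj}$ and
\[
b_j = d_j + e_j = c_{1j} + \cdots + c_{m-1,j} + c_{mj} = \sum_{i=1}^m c_{ij}.
\]

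The only real subtlety is noncommutativity: one must read each sum strictly from left to right and verify that the two successive applications of the inductive hypothesis produce the $c_{ij}$ in the prescribed order. The final display above handles exactly this point, since the $(2,n)$-step places $e_j$ strictly after $d_j$ inside $b_j$ while the $(m-1,n)$-step expands $d_j$ as $c_{1j} + \cdots + c_{m-1,j}$ in the correct left-to-right order. No further obstacle arises, and the pseudo-effect-algebra case requires no additional argument beyond invoking (PE1) to justify existence of the intermediate sums.
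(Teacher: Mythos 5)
Your proof is correct and follows essentially the same route as the paper, which does not write out a proof but refers to the standard refinement induction of \cite[Lem 3.9]{DvVe1} and the proof of Proposition~\ref{pr:2.6}: a double induction that reduces $(m,n)$ to smaller cases by grouping terms and refining, with the only point requiring care being the left-to-right order of the $c_{ij}$ inside each $b_j$, which you check correctly. The only cosmetic difference is that you split off the block $a_1+\cdots+a_{m-1}$ (a row of the table), whereas the paper's template in Proposition~\ref{pr:2.6} groups $b_n+b_{n+1}$ (a column); the two are dual instances of the same argument.
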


\begin{proposition}\label{pr:2.6}
Let $(G,u)$ be a unital po-group satisfying \RIP such that $E=\Gamma(G,u)$ is a pseudo effect algebra satisfying \RDP or \RDP$_1.$
%
If $a_1+\cdots+a_m = b_1+\cdots +b_n,$ where $a_1,\ldots,a_{m-1},b_1,\ldots,b_{n-1} \in E,$ and $a_m,b_n \in G^+,$ then for $\{a_i\}$ and $\{b_j\}$ $(m,n)$-\RDP or $(m,n)$-\RDP$_1$ holds whenever $(m,n)$-\RDP or $(m,n)$-\RDP$_1,$ respectively,  does hold for $G.$

\end{proposition}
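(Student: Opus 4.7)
My plan is to apply the hypothesized $(m,n)$-\RDP (resp.\ $(m,n)$-\RDP$_1$) for $G$ directly to the given identity. Since $E=\Gamma(G,u)\subseteq G^+$, every summand $a_i$ and $b_j$ lies in $G^+$, so the equation $a_1+\cdots+a_m=b_1+\cdots+b_n$ is a valid equation in $G^+$, and the hypothesis immediately supplies a refinement system $\{c_{ij}\}\subseteq G^+$ with $a_i=\sum_{j=1}^n c_{ij}$ and $b_j=\sum_{i=1}^m c_{ij}$ (plus the commutativity condition (3.1) in the \RDP$_1$ case).

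The next step is to check that this refinement respects the mixed $E$/$G^+$ structure demanded by the statement. For any $i<m$, the row identity together with the positivity of the $c_{ij}$'s forces $0\le c_{ij}\le a_i\le u$, whence $c_{ij}\in E$. Likewise, for any $j<n$, the column identity yields $c_{ij}\le b_j\le u$, so $c_{ij}\in E$. Only the corner entry $c_{mn}$ is required to lie in $G^+$, bounded above by $a_m$ and $b_n$, which matches exactly the hypothesis $a_m,b_n\in G^+$. For the \RDP$_1$ conclusion, condition (3.1) is a statement about the \com-relation between certain sums of the $c_{ij}$'s in $G^+$, and since $E\subseteq G^+$ it transfers to the mixed setting without modification.

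I do not anticipate a substantive obstacle: the actual refinement is handed to us by the $(m,n)$-\RDP of $G$, and the placement of each $c_{ij}$ in either $E$ or $G^+$ is forced by the natural bounds $c_{ij}\le a_i$ and $c_{ij}\le b_j$. The standing hypotheses that $G$ satisfies \RIP and $E$ satisfies \RDP (resp.\ \RDP$_1$) do not actually enter this particular derivation; they form the ambient context and are presumably what is exploited in the companion Proposition~\ref{pr:2.5}, whose proof, as the remark preceding the statement indicates, follows the same template as the present one.
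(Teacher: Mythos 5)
There is a genuine gap: you have read the hypothesis in a way that makes the proposition vacuous and, more importantly, useless for the purpose it serves in the paper. The clause ``whenever $(m,n)$-\RDP{} $\ldots$ does hold for $G$'' is admittedly badly worded, but the intended hypothesis is that the \emph{pseudo effect algebra} $E=\Gamma(G,u)$ satisfies \RDP{} (resp.\ \RDP$_1$), together with \RIP{} for $G$; the paper's own proof phrases the induction claim as ``RDP or RDP$_1$ holds whenever they hold in $E$,'' and Theorem~\ref{th:2.7} then applies Proposition~\ref{pr:2.6} precisely in order to \emph{deduce} \RDP$_1$ for $(G,u)$ from \RDP$_1$ for $E$. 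Under your reading -- where the $(m,n)$-\RDP{} of $G$ is granted outright -- the conclusion is an immediate instance of the definition, Proposition~\ref{pr:2.6} carries no information, and Theorem~\ref{th:2.7} becomes circular. Your closing observation that none of the stated hypotheses (\RIP{} for $G$, \RDP{}/\RDP$_1$ for $E$) enter your argument is exactly the symptom: a refinement property of the group is what this part of the paper is trying to establish, so it cannot be assumed.

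What the proposition actually requires is a two-stage argument that your proposal omits entirely. For the base case $m=n=2$, take $a_1+a_2=b_1+b_2$ with $a_1,b_1\in E$ and $a_2,b_2\in G^+$. Then $a_1,b_1\le a_1+a_2,\,u$, and \RIP{} in $G$ yields $z$ with $a_1,b_1\le z\le a_1+a_2,\,u$; hence $z\in E$ and $z=a_1+x=b_1+y$ for some $x,y\in E$. The \RDP{} (resp.\ \RDP$_1$) of the pseudo effect algebra $E$, applied to this last equation, which lives entirely inside $E$, produces a $2\times 2$ refinement of $a_1+x=b_1+y$, and writing $a_1+a_2=z+v$ with $v\in G^+$ (so that $a_2=x+v$ and $b_2=y+v$) one corrects the corner entry to $c_{22}+v$ to obtain the refinement of the original mixed equation. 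The general case is then an induction on $n$ (and $m$) that splices such tables together, with an additional \RDP$_0$ decomposition needed to propagate the commutativity requirement (3.1) in the \RDP$_1$ case. None of this can be bypassed by citing a decomposition property of $G$ itself.
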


\begin{proof}
(1)  The assertion is trivially satisfied for $n=1.$
Assume that $a_1+a_2 = b_1+b_2,$ where $a_1,b_1\in E$ and $a_2,b_2 \in G^+.$ Then $a_1,b_1
\le a_1+a_2 = b_1+b_2, u$ and the RIP entails that there is an element $z \in E$ such that $a_1,b_1\le z \le a_1+a_2,u.$  There are two elements $x,y \in E$ such that $a_1+x=z= b_1+y$ and the RDP yields four elements $c_{ij}$ such that

$$
\begin{array}{ccc}
a_1 & c_{11} & c_{12}\\
x   & c_{21} & c_{22} \\
           &b_1 & y
\end{array}.
$$

We can find an element $v \in G^+$ such that $a_1+a_2= z+v = b_1 + b_2.$ Then $ a_1+a_2 = z+v = a_1+x + v,$ i.e. $a_2= x+v,$ and $b_1+b_2=z+v= b_1+y+v$ so that $b_2=y+v.$ Hence, we have

$$
\begin{array}{ccc}
a_1 & c_{11} & c_{12}\\
a_2   & c_{21} & c_{22} +v \\
           &b_1 & b_2
\end{array},
$$
which implies RDP or RDP$_1$ for our elements $a_1,a_2,b_1,b_2.$

(2)  Assume $a_1+\cdots+a_m = b_1+\cdots +b_n,$ where where $a_1,\ldots,a_{m-1},b_1,\ldots,b_{n-1} \in E,$ and $a_m,b_m \in G^+.$ We asserts that for this case RDP or RDP$_1$ holds whenever they hold in $E.$
By RDP$_1$ and (1), the statement is true for $m,n\le 2.$ Suppose that the statement is true for all pairs of integers $m',n'$ such that $m'\le m,$ $m\ge 2$ is fixed,  and $n'< n,$ where $n\ge 3.$ Then $a_1+\cdots+a_m = b_1+\cdots + b_{n-1} + b_n +b_{n+1} =
b_1+\cdots +b_{n-1} + (b_n +b_{n+1}).$ By the induction hypothesis, there are elements $c_{ij},d_1,\ldots,d_{m-1}\in E$ and $d_m \in G^+$ such that

$$
\begin{array}{ccccc}
a_1 & c_{11} & \cdots & c_{1,n-1}&d_1\\
\vdots &  & & & \\
a_m   & c_{m1} & \cdots & c_{m,n-1} & d_{m} \\
           &b_1 & \cdots& b_{n-1} & b_n + b_{n+1}
\end{array}.
$$

Then for $d_1+\cdots +d_m = b_{n} + b_{n+1}$ we can apply the induction hypothesis, so that there are additional elements $c_{ij} \in G^+$ such that

$$
\begin{array}{ccc}
d_1 & c_{1n} & c_{1,n+1}\\
\vdots &  \\
d_m   & c_{mn} &  c_{m,n+1}  \\
       &    b_n & b_{n+1}
\end{array}.
$$
Putting together, we have

$$
\begin{array}{cccccc}
a_1 & c_{11} & \cdots & c_{1,n-1}&c_{1n}&c_{1,n+1}\\
\vdots &  & & & \\
a_m   & c_{m1} & \cdots & c_{m,n-1} & c_{mn}&c_{m,n+1}\\
           &b_1 & \cdots& b_{n-1} & b_n & b_{n+1}
\end{array},
$$
which entails $(m,n)$-RDP for this case. To prove $(m,n)$-RDP$_1,$ we have by the induction
$$
c_{i+1,j} + \cdots + c_{mj} \,\mbox{\rm \bf com}\, c_{i,j+1} + \cdots + c_{i,n-1} + d_i.
$$
Now take $x,y \in G^+$ such that $ x \le c_{i+1,j} + \cdots + c_{mj}$ and
$y \le c_{i,j+1} + \cdots + c_{i,n-1} + c_{i,n} +c_{i,n+1}.$ By RDP$_0,$ we have $y=y_1+\cdots +y_{n-1}+y_n +y_{n+1},$ where $y_k \le c_{i,k+1}$ for $k=j+1,\ldots, n+1.$ Since $y_n +y_{n+1}\le d_i,$ we see that $(m,n)$-RDP$_1$ holds for our case.
\end{proof}

\begin{theorem}\label{th:2.7}
Let $(G,u)$ be a unital po-group such that $\Gamma(G,u)$ satisfies \RIP. If the pseudo effect algebra $E=\Gamma(G,u)$ satisfies \RDP$_1,$ then  $(G,u)$ does $\RDP_1.$
\end{theorem}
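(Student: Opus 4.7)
My plan is to prove $\RDP_1$ for $(G,u)$ by reducing to Proposition \ref{pr:2.6}, which already handles decompositions in which all but the last element of each row and column lies in $E$. Given $a_1, a_2, b_1, b_2 \in G^+$ with $a_1 + a_2 = b_1 + b_2$, the strategy is to decompose each of $a_1, a_2, b_1, b_2$ as a chain of $E$-elements, thereby transforming the equation into a longer $E$-valued equation, then apply the $(m,n)$-$\RDP_1$ version for $E$ (the $\RDP_1$ analog of Proposition \ref{pr:2.5}; cf.~\cite[Lem 3.9]{DvVe1}) to obtain a fine refinement satisfying the commutativity condition (3.1), and finally collapse the fine refinement into the desired $2\times 2$ one by summing appropriate blocks. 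The commutativity $c_{12}\,\com\,c_{21}$ in the $2\times 2$ refinement will follow from the block-wise commutativity of (3.1) in the fine refinement.

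The first step is to choose $N \geq 1$ with $a_i, b_j \leq Nu$, available since $u$ is a strong unit. Then one shows by induction on $k$ that every $g \in G^+$ with $g \leq ku$ admits a decomposition $g = e_1 + \cdots + e_k$ with each $e_j \in E$. The induction step reduces to producing $e_1 \in E$ with $e_1 \leq g$ and $g - e_1 \leq (k-1)u$, equivalently $g - (k-1)u \leq e_1$. When $(k-1)u \leq g$, one takes $e_1 = g - (k-1)u$, which lies in $E$ since $g \leq ku$; when $g \leq (k-1)u$, one takes $e_1 = 0$ and applies the induction hypothesis to $g$; the remaining case, where $g$ and $(k-1)u$ are incomparable, is handled by an interpolation argument.

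The main obstacle is precisely this interpolation step: when $g$ and $(k-1)u$ are incomparable, we need $e_1 \in E$ satisfying both $e_1 \leq g$ and $g - (k-1)u \leq e_1$. The surrounding bounds $g$ and $g - (k-1)u$ need not lie in $E$, so $\RIP$ in $E$ does not apply directly; we must first use the $\RDP_1$ hypothesis in $E$ to refine a suitable auxiliary sum equation bounded above by $u$, producing elements in $E$ that bracket the desired $e_1$, and only then apply $\RIP$ in $E$ to obtain $e_1$. Once this decomposition lemma is in hand, the rest of the argument is essentially bookkeeping: rewriting $a_1 + a_2 = b_1 + b_2$ as a $(2N,2N)$-equation in $E$, invoking $(2N,2N)$-$\RDP_1$ for $E$, and re-grouping the resulting fine refinement into four blocks $c_{ij}$. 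The commutativity $c_{12}\,\com\,c_{21}$ is then inherited from (3.1), completing the proof that $(G,u)$ satisfies $\RDP_1$.
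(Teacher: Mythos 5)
Your plan stands or falls on the decomposition lemma (every $g\in G^+$ with $g\le ku$ splits as $e_1+\cdots+e_k$ with $e_j\in E$), and the gap is exactly where you locate it: the ``incomparable'' case. The proposed repair --- ``use \RDP$_1$ in $E$ to refine a suitable auxiliary sum equation bounded above by $u$'' --- is not an argument, and no argument of that kind can exist if the only interpolation available is \RIP\ inside $E=\Gamma(G,u)$. The paper's own Example \ref{ex:2.9} shows this: there $E=\{(0,0),(1,0)\}$ satisfies \RIP\ and \RDP$_1$ trivially, yet $(0,1)\in G^+$, $(0,1)\le 3u$, and $(0,1)$ is not a sum of elements of $E$; indeed \RDP$_1$ fails for that $G$. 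So read literally (``$\Gamma(G,u)$ satisfies \RIP''), the lemma and even the theorem fail. What the paper actually uses is \RIP\ for the po-group $(G,u)$ itself: that is the hypothesis of Proposition \ref{pr:2.6}, which its proof of Theorem \ref{th:2.7} invokes, and the discussion before Examples \ref{ex:2.9}--\ref{ex:2.10} speaks of ``\RIP\ fails for $G$.'' With that reading your obstacle evaporates: $G$ has a strong unit, hence is directed, so by (2.2) \RIP\ for $G$ is \RDP$_0$ for $G$, and $0\le g\le u+\cdots+u$ refines in one step to $g=e_1+g'$ with $0\le e_1\le u$, $0\le g'\le (k-1)u$; induct. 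No comparability case analysis is needed.

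Granting that, your route is a genuine alternative to the paper's. The paper runs an induction on the number of $u$-bounded summands of $b_1$ (base case Proposition \ref{pr:2.6}), composing $2\times2$ refinement tables one summand at a time; you pass at once to a $(2N,2N)$-refinement in $E$ via the $(m,n)$-\RDP$_1$ form of \cite[Lem 3.9]{DvVe1} and collapse blocks. Two points in the collapse need to be made explicit rather than asserted: (a) regrouping $a_1=\sum_i\sum_j d_{ij}$ into $c_{11}+c_{12}$ in a noncommutative group requires moving lower-left entries past upper-right ones, which is precisely what (3.1) licenses; and (b) to get $c_{12}\,\com\,c_{21}$ for the collapsed blocks you must split arbitrary $0\le x\le c_{12}$ and $0\le y\le c_{21}$ into pieces lying under the individual $d_{ij}$ before (3.1) applies, and since these elements may exceed $u$ this again uses \RDP$_0$ in $G$, not merely in $E$. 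Both steps are routine once \RDP$_0$ for $G$ is available, so the corrected proposal is sound, and arguably tidier than the paper's double induction, but as written it does not close.
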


\begin{proof}
Let $a_1+a_2 = b_1 + b_2,$ where $a_1,a_2,b_1,b_2 \in G^+.$ By Proposition \ref{pr:2.6}, the statement is true for $a_1,b_1 \le u$ and all $a_2,b_2 \in G^+.$ Assume by the induction hypothesis that the theorem is true for any $a_1 \le mu$ for a fixed integer $m\ge 1,$ for any $b_1 \le nu,$ $n\ge 1,$ and all $a_2,b_2 \in G^+.$ Thus let $b_1 = b_{11}+\cdots+b_{in}+b_{1,n+1},$ where all $b_j \le u.$ We prove the statements for $n+1.$ Then $a_1 + a_2 = b_1'+b_2 ', $ where $b_1'= b_{11}+\cdots+ b_{in}$ and $b_2'= b_{1,n+1}+b_2.$ By the induction hypothesis, there are elements $\{c_{ij}\}$ in $G^+$ such that

$$
\begin{array}{ccc}
a_1 & c_{11} & c_{12}\\
a_2   & c_{21} & c_{22} \\
           &b_1' & b_2'
\end{array},
$$
and $c_{12}  \,\mbox{\rm \bf com}\, c_{21}.$ Similarly there are elements $\{d_{ij}\}$ in $G^+$ such that

 $$
\begin{array}{ccc}
c_{12} & d_{11} & d_{12}\\
c_{22}   & d_{21} & d_{22} \\
           &b_{1,n+1} & b_2
\end{array},
$$
and $d_{12} \,\mbox{\rm \bf com}\, d_{21}.$  Whence

$$
\begin{array}{cccc}
a_1 & c_{11} & d_{11}& d_{12}\\
a_2   & c_{21} & d_{21}& d_{22} \\
           &b_1' & b_{1,n+1}& b_2
\end{array}
$$

and
$$
\begin{array}{ccc}
a_1 & c_{11}+d_{11} & d_{12}\\
a_2   & c_{21}+d_{21} & d_{22} \\
           &b_1 & b_2
\end{array}.
$$
The last table holds thanks to RDP$_1$. By this property, we have also
$c_{21} + d_{12} \,\mbox{\rm \bf com}\, d_{21}.$
\end{proof}

Thanks to the last theorem, we can extend Proposition \ref{pr:3.1}.

\begin{theorem}\label{th:2.8}
A po-group $\mathbb Z \lex G$ satisfies \RDP$_1$ if and only if $G$ is a directed po-group with $\RDP_1.$
\end{theorem}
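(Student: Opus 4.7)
The plan is to verify the biconditional by assembling the preceding three results, $G_1=\mathbb{Z}$, $G_2=G$ being the particular case that makes the dichotomy of Proposition \ref{pr:2.3} collapse into a clean directedness criterion.

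For the necessity direction I would assume $\mathbb{Z}\lex G$ satisfies $\RDP_1$ and apply Proposition \ref{pr:2.3}(2) to $G_1=\mathbb{Z}$, $G_2=G$. This immediately yields $\RDP_1$ for both factors, plus the alternative: either $\mathbb{Z}$ satisfies SRIP, or $G$ is directed. Since it was already recorded in Section 2 that $\mathbb{Z}$ fails SRIP, the only surviving possibility is that $G$ is directed. Done.

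For the sufficiency direction, assuming $G$ is directed with $\RDP_1$, the strategy is to feed the unital po-group $(\mathbb{Z}\lex G,(1,0))$ into Theorem \ref{th:2.7}. This requires two facts about $E=\Gamma(\mathbb{Z}\lex G,(1,0))$. First, $E$ satisfies $\RDP_1$, and this is exactly the content of Proposition \ref{pr:3.1} (which is why directedness of $G$ was needed). Second, $E$ satisfies $\RIP$: since $G$ has $\RDP_1$, the implication chain (2.2) gives that $G$ has $\RIP$; $\mathbb{Z}$ has $\RIP$ as a chain; and $G$ being directed satisfies the hypothesis (b) of Proposition \ref{pr:2.3}(1), so $\mathbb{Z}\lex G$ has $\RIP$. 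That property restricts to the interval $E$, because for $a_1,a_2\le b_1,b_2$ in $E$ any interpolant $c\in\mathbb{Z}\lex G$ automatically lies in $[0,(1,0)]$. With both hypotheses confirmed, Theorem \ref{th:2.7} lifts $\RDP_1$ from $E$ back to $(\mathbb{Z}\lex G,(1,0))$, hence to $\mathbb{Z}\lex G$.

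I do not expect a genuine obstacle: the proof is essentially a gluing of Propositions \ref{pr:2.3}, \ref{pr:3.1} and Theorem \ref{th:2.7}, with every nontrivial step already performed elsewhere. The only point that deserves a line of care is the restriction of $\RIP$ from $\mathbb{Z}\lex G$ to the interval $E$, which is routine. The conceptual content of the theorem is that the "$\mathbb{Z}$ fails SRIP" observation forces the dichotomy of Proposition \ref{pr:2.3} to reduce to a single condition on $G$, so that the converse implication left open in the general discussion becomes a genuine equivalence in this special case.
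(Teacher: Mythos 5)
Your proof is correct and follows essentially the same route as the paper: Proposition \ref{pr:2.3}(2) plus the failure of SRIP for $\mathbb Z$ gives necessity, and sufficiency is obtained by combining Proposition \ref{pr:2.3}(1) (for RIP of $\mathbb Z\lex G$), Proposition \ref{pr:3.1} (for RDP$_1$ of $\Gamma(\mathbb Z\lex G,(1,0))$), and Theorem \ref{th:2.7}. Your extra remark that RIP restricts from the group to the interval is a worthwhile clarification but does not change the argument.
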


\begin{proof}
If $\mathbb Z \lex G$ satisfies RDP$_1$, by Proposition \ref{pr:2.3}(2), the po-group $G$ is directed and with RDP$_1.$

Conversely, let $G$ be directed and with RDP$_1.$
By Proposition \ref{pr:2.3}(1), the po-group $\mathbb Z \lex G$ satisfies RIP and by Proposition \ref{pr:3.1}, the pseudo effect algebra $E=\Gamma(\mathbb Z\lex G,(1,0))$ satisfies RDP$_1.$ Applying Theorem \ref{th:2.7}, we see that $\mathbb Z \lex G$ satisfies RDP$_1.$
\end{proof}

Let $E$ be a pseudo effect algebra. A pair $((G,u),\iota_E),$ where $(G,u)$ is a unital po-group and $\iota_E$ is a homomorphism from $E$ into the pseudo effect algebra $\Gamma(G,u)$, is said to be a {\it universal group} for $E$ if, for every unital po-group $(H,v)$ and  for every homomorphism  $h: E \to \Gamma(H,v),$ there is a po-homomorphism of unital po-groups $k:(G,v)\to (H,v)$ such that $h=k\circ \iota_E.$ Then $h$ is a unique po-homomorphism satisfying $h=k\circ \iota_E.$

By \cite[Thm 7.2]{DvVe2} and by the proof of \cite[Thm 7.4]{DvVe2}, if $(G,u)$ is a unital po-group satisfying RDP$_1,$ then $(G,u,\iota),$ is a universal po-group for the pseudo effect algebra $E=\Gamma(G,u),$ where $\iota$ is the natural embedding from $E$ into $G.$

The following two examples show that Theorem \ref{th:2.7} cannot be extended to the case of all unital po-groups $(G,u)$ such that $E=\Gamma(G,u)$ satisfies RDP$_1$ but RIP fails for $G.$

\begin{example}\label{ex:2.9}{\rm We endow the group $G:=\mathbb Z \times Z$ with the partial ordering $\le$: $(a,b)\le (c,d)$ iff $a\le c$ and $b\le d$ or $(c+d)-(a+b) \ge 2.$ Then the element $u=(1,0)$ is a strong unit for $G,$ and $E:=\Gamma(G,u)=\{(0,0),(1,0)\}.$ The RIP fails for $G$ because for the elements $(1,0),(0,1) \le (0,3),(3,0)$ there is no element $(a,b)$ such that $(1,0),(0,1) \le (a,b)\le (0,3),(3,0).$ On the other hand, $E$ satisfies RDP, RDP$_1,$ and RDP$_2,$ but for $G,$ RDP, RDP$_1,$ and RDP$_2,$ fail.

We note that the universal group for $E$ is isomorphic with $(\mathbb Z,1)$ with the natural embedding $(0,0)\mapsto 0$ and $(1,0)\mapsto 1.$

}
\end{example}

\begin{example}\label{ex:2.10}{\rm Let again $G = \mathbb Z \times Z$ and we define a new partial order $\le$:  $(a,b)\le (c,d)$ if $a=c$ and $b=d$ or $a+b <c +d.$ Then $u=(1,0)$ is a strong unit, $E=\Gamma(G,u)=\{(0,0), (1,0)\},$ satisfies RDP, RDP$_1$ and RDP$_2$ but RIP  fails for $G$~-  take $(1,0),(0,1)\le (0,2),(2,0).$

}
\end{example}

In Proposition \ref{pr:3.1}, we have proved that if $G$ is a directed po-group with RDP, then $E=\Gamma(\mathbb Z \lex G, (1,0))$ satisfies RDP, too. But we do not know whether $\mathbb Z \lex G$ does satisfy RDP. The following theorem gives a positive solution.

\begin{theorem}\label{th:2.11} A po-group $\mathbb Z \lex G$ satisfies \RDP if and only if $G$ is a directed po-group with $\RDP.$
\end{theorem}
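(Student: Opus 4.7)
The plan is to follow the same template as Theorem \ref{th:2.8}, now with \RDP{} in place of \RDP$_1$. The forward direction is an immediate application of Proposition \ref{pr:2.3}(2): if $\mathbb Z \lex G$ has \RDP, then $G$ has \RDP, and since $\mathbb Z$ fails \textrm{SRIP} (as observed right after Definition \ref{de:2.1}), condition (b) of that proposition forces $G$ to be directed.

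For the converse, I would assume $G$ is a directed po-group with \RDP{} and proceed in three steps. First, invoke Proposition \ref{pr:2.3}(1) to conclude that $\mathbb Z \lex G$ satisfies \RIP: the factor $G$ is directed, and both $\mathbb Z$ and $G$ satisfy \RIP{} because \RDP{} implies \RIP{} by~(2.2). Second, apply Proposition \ref{pr:3.1} to deduce that the interval pseudo effect algebra $E = \Gamma(\mathbb Z \lex G, (1,0))$ satisfies \RDP. Third — the real content — lift \RDP{} from $E$ back to the full po-group $\mathbb Z \lex G$.

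The lifting step is the analogue of Theorem \ref{th:2.7} for \RDP{} instead of \RDP$_1$, and I would reproduce essentially the same induction. The base case, in which one of the summands on each side is bounded by the strong unit $u = (1,0)$, is already delivered by Proposition \ref{pr:2.6}, whose \RDP{} version is proved there in parallel with the \RDP$_1$ version. For the inductive step, given $a_1+a_2=b_1+b_2$ in $(\mathbb Z \lex G)^+$ with $b_1 \le (n+1)u$, write $b_1 = b_1' + b_{1,n+1}$ with $b_1' \le nu$ and $b_{1,n+1}\le u$; apply the inductive hypothesis to the regrouping $a_1+a_2 = b_1' + (b_{1,n+1}+b_2)$ to obtain an \RDP{} table with entries $c_{ij}$, then apply Proposition \ref{pr:2.6} to the subsum $c_{12}+c_{22} = b_{1,n+1}+b_2$ to split the right column into $d_{ij}$. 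Stacking the two tables as in the proof of Theorem \ref{th:2.7} gives an \RDP{} decomposition of $a_1+a_2 = b_1+b_2$, with the \textbf{com}-bookkeeping simply dropped. A symmetric induction on $a_1 \le mu$ then covers arbitrary $a_1\in (\mathbb Z\lex G)^+$, using that $(1,0)$ is a strong unit of $\mathbb Z \lex G$.

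The main (and only) obstacle I anticipate is verifying that Theorem \ref{th:2.7}'s induction really does go through with \RDP{} alone — i.e., checking that every invocation of \RDP$_1$ in that proof can be replaced by \RDP. Inspection of that argument shows the \RDP$_1$ strength is used only to propagate the extra commutation clause between the $c_{ij}$'s; the gluing of the two tables and the validity of the resulting decomposition rely only on the plain \RDP. Consequently, the same scheme yields \RDP{} for $\mathbb Z \lex G$, completing the converse.
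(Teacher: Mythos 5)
Your forward direction and the first two steps of the converse are fine, but the lifting step --- the purported \RDP-analogue of Theorem \ref{th:2.7} --- contains a genuine gap, and it is exactly the point at which that proof cannot be repeated with \RDP\ in place of \RDP$_1$. When the two tables are stacked and the columns merged, one must pass from
$$
b_1=b_1'+b_{1,n+1}=(c_{11}+c_{21})+(d_{11}+d_{21})
$$
to
$$
b_1=(c_{11}+d_{11})+(c_{21}+d_{21}),
$$
which requires $c_{21}+d_{11}=d_{11}+c_{21}$. In Theorem \ref{th:2.7} this commutation is supplied precisely by the \RDP$_1$ clause: $d_{11}\le c_{12}$ and $c_{12}\,\mbox{\rm \bf com}\,c_{21}$ force the two elements to commute --- this is what the paper means by ``the last table holds thanks to RDP$_1$.'' With plain \RDP\ in a non-Abelian po-group there is no reason for $c_{21}$ and $d_{11}$ to commute, so the merged column need not sum to $b_1$ at all. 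Hence the commutation hypothesis is not mere bookkeeping to be ``simply dropped''; it is what legitimizes the column-merge, and your inspection of the argument is mistaken. (The paper itself signals this: just before Theorem \ref{th:2.11} it remarks that Proposition \ref{pr:3.1} gives \RDP\ for $\Gamma(\mathbb Z\lex G,(1,0))$ but that it is \emph{not} known from this whether $\mathbb Z\lex G$ has \RDP, which is why a separate proof is needed.)

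The paper's actual proof takes a different, direct route that never lifts from the interval algebra. It describes the positive cone $(\mathbb Z\lex G)^+=\{(0,g):g\in G^+\}\cup\{(n,g):n\ge 1,\ g\in G\}$ and treats $(m_1,a_1)+(m_2,a_2)=(n_1,b_1)+(n_2,b_2)$ case by case according to which integer coordinates vanish. In the main case all of $m_1,m_2,n_1,n_2$ are $\ge 1$; directedness of $G$ yields a common lower bound $d\le a_1,a_2,b_1,b_2$, one applies \RDP\ of $G$ a single time to $(-d+a_1)+(a_2-d)=(-d+b_1)+(b_2-d)$ in $G^+$, and then writes down an explicit refinement table in $\mathbb Z\lex G$ by reattaching $d$ and distributing the integer coordinates (splitting into the subcases $m_1\ge n_1$ and $n_1\ge m_1$). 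Because the refinement in $G$ is performed only once, no two refinement entries ever have to be commuted past each other, which is how the noncommutativity obstruction is sidestepped. To rescue your scheme you would need an independent proof of the \RDP-version of Theorem \ref{th:2.7}; as written, your argument does not close.
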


\begin{proof}
If $\mathbb Z \lex G$ satisfies RDP, by Proposition \ref{pr:2.3}(2), the po-group $G$ is directed and with RDP.

Conversely, let $G$ be directed and with RDP.
The elements of $(\mathbb Z \lex G)^+$ are of the form
$\{(0,g): g \in G^+\} \cup \{(n,g): n \ge 1,\ g \in G\}.$

(i) Let $(0,a_1)+(0,a_2) = (0,b_1)+ (0,b_2).$  Since then $a_1,a_2,b_1,b_2 \in G^+,$ the RDP follows from RDP for $G.$

(ii) Let $\min\{m_1,m_2,n_1,n_2\}=0$ and $m_1+m_2 >0.$ If we use the same ideas as  case (i) of Proposition \ref{pr:3.1}, we have the following decomposition tables

$$
\begin{array}{ccc}
(m_1,a_{1})  &  (n_1,b_1) & (n_2,-b_1+a_1)\\
(0,a_{2})   & (0,0) & (0,a_2) \\
           &(n_1,b_{1}) & (n_2,b_{2})
\end{array}, \ \mbox{for}\ n_2 >0,
$$

$$
\begin{array}{ccc}
(m_1,a_{1})  &  (n_1,d+e_{11}) & (0,e_{12})\\
(0,a_{2})   & (0,e_{21}) & (0,e_{22}) \\
           &(n_1,b_{1}) & (0,b_{2})
\end{array}, \ \mbox{for}\ n_2 =0,
$$
where $d\le a_1,b_1.$ The case $(m_1,a_1)+(0,a_2)=(0,b_1)+(n_2,b_2)$ follows from the first case of (ii).

In a similar way we deal with the case $(0,a_1)+(m_2,a_2) = (n_1,b_1)+ (n_2,b_2),$ where  $n_1 \ge 1.$ Then $m_2 \ge 1,$  $a_1 \ge 0,$ and we use the decomposition

$$
\begin{array}{ccc}
(0,a_{1})  &  (0,a_1) & (0,0)\\
(m_2,a_{2})   & (n_1,-a_1+b_1) & (n_2,b_2) \\
           &(n_1,b_{1}) & (n_2,b_{2})
\end{array}.
$$

(iii) Let $\min\{m_1,m_2,n_1,n_2\}\ge 1.$ We use the tables

$$
\begin{array}{ccc}
-d+a_{1}  &  e_{11} & e_{12}\\
a_{2}-d   & e_{21} & e_{22} \\
           &-d+ b_{1} & b_{2}-d
\end{array},
$$
where $d \le a_1,a_2,b_1,b_2,$ and then
$$
\begin{array}{ccc}
(m_1,a_{1})  &  (n_1,d+e_{11}) & (m_1-n_1,e_{12})\\
(m_2,a_{2})   & (0,e_{21}) & (m_2,e_{22}+d) \\
           &(n_1,b_{1}) & (n_2,b_{2})
\end{array} \ \mbox{for}\  m_1 \ge n_1,
$$
where $d \le a_1,a_2,b_1,b_2,$ and finally
$$
\begin{array}{ccc}
(m_1,a_{1})  &  (m_1,d+e_{11}) & (0,e_{12})\\
(m_2,a_{2})   & (n_1-m_1,e_{21}) & (n_2,e_{22}+d) \\
           &(n_1,b_{1}) & (n_2,b_{2})
\end{array} \mbox{for}\  n_1 \ge m_1,
$$
where $d \le a_1,a_2,b_1, b_2.$
\end{proof}

\section{Application to $n$-perfect Pseudo Effect Algebras}

We will study a special class of pseudo effect algebras which can be decomposed into $n+1$ comparable slices; we call them $n$-perfect. A prototypical example is $E=\Gamma(\mathbb Z\lex G,(n,0)).$ Such pseudo effect algebras are perfect when $n=1,$ in general. Perfect effect algebras were studied in \cite{Dvu2}, $n$-perfect pseudo MV-algebras in \cite{DDT, Dvu3}, perfect MV-algebras in \cite{DiLe}, and $n$-perfect pseudo effect algebras were introduced in \cite{DXY}. The aim of this section is to apply the relationships of RDP$_1$ of $\mathbb Z\lex G$ with RDP$_1$ of $G$ and with RDP$_1$ of the interval pseudo effect algebra $\Gamma(\mathbb Z\lex G,(n,0)),$ respectively, to simplify some results on strong $n$-perfect pseudo effect algebras from \cite{DXY}.

Let $n \ge 1$ be an integer. According to \cite{DXY}, a pseudo effect algebra $E$ is said to be  {\it $n$-perfect}  if
\begin{itemize}
\item[(i)] there exists an $n$-decomposition   of $E$ i.e., a system $(E_{0}, E_{1},\ldots, E_{n}),$ of subsets of $E$ such that  $E_i \cap E_j = \emptyset$ for $i\le j,$ $i,j \in \{0,1,\ldots,n\}$ and $\bigcup_{i=0}^n E_i = E.$

\item[(ii)] $E_{i}+ E_{j}$ exists   if $i+j<n.$

\item[(iii)] $E_{0}$ is a unique maximal ideal of $E.$
\end{itemize}

We note that a $1$-perfect pseudo effect algebra is said to be simply {\it perfect.}

Let $n>0$ be an integer. An element $a$ of a pseudo effect algebra $E$ is said to be {\it cyclic of order} $n>0$ if $na$ exists in $E$ and $na =
1.$ If $a$ is a cyclic element of order $n$, then $a^- = a^\sim$, indeed, $a^- = (n-1) a = a^\sim$.

We note that a group $G$ enjoys {\it unique  extraction of
roots} if, for all positive integers $n$ and $g,h \in G$, $g^n =
f^n$ implies $g=h$.  We recall that every linearly ordered group, or
a representable $\ell$-group, in particular every Abelian
$\ell$-group enjoys unique  extraction of roots, see \cite[Lem.
2.1.4]{Gla}.

Finally, according to \cite{DXY}, we define the following notion. Let $E$ be an $n$-perfect PEA satisfying (RDP)$_1$. We say that an $n$-perfect pseudo effect algebra $E$ with RDP$_1$ is a {\it
strong $n$-perfect} pseudo effect algebra   if

\begin{itemize}
\item[(i)] there exists a torsion-free unital po-group $(H,u)$ such that
$E=\Gamma(H,u),$

\item[(ii)] there exists an element $c\in E_{1}$ such that (a) $nc=u,$ and (b) $c\in C(H),$ where $C(H)=\{h \in H: h+x=x+h\ \forall x \in H\}.$
\end{itemize}

\begin{theorem}\label{th:4.1}
Let $E$ be a pseudo effect algebra  and $n \ge 1$ be an integer. Then $E$ is a strong $n$-perfect pseudo effect algebra if and only if there exists a torsion-free directed po-group $G$ with \RDP$_1$ such that $E\cong \Gamma(\mathbb Z\lex G,(n,0)).$ In addition, $G$ is unique.
\end{theorem}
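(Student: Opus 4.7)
\emph{Sufficiency.} The plan is to take $G$ torsion-free, directed with \RDP$_1$, and build the required structure on $E := \Gamma(\mathbb{Z} \lex G, (n, 0))$ by putting Theorem~\ref{th:2.8} to work. First, by Theorem~\ref{th:2.8}, $H := \mathbb{Z} \lex G$ has \RDP$_1$, and $H$ is torsion-free because $G$ is (noted at the end of Section~2). Then $u := (n, 0)$ is a strong unit, so $E$ is a pseudo effect algebra with \RDP$_1$. I would exhibit the $n$-decomposition using the obvious slices
\[
E_0 := \{(0, g) : g \in G^+\}, \quad E_i := \{i\} \times G\ (0 < i < n), \quad E_n := \{(n, -g) : g \in G^+\}.
\]
These partition $E$ and satisfy $E_i + E_j \subseteq E_{i+j}$ whenever $i + j < n$. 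The ideal $E_0$ is maximal and unique: any proper ideal $I \supsetneq E_0$ contains some element with first coordinate $\ge 1$, and downward closure plus closure under addition forces $c := (1, 0) \in I$, whence $u = nc \in I$, contradicting properness. Finally $c \in E_1$ satisfies $nc = u$ and is central in $H$ since the two coordinates add independently, so $E$ is strong $n$-perfect.

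\emph{Necessity.} Now suppose $E$ is strong $n$-perfect, so $E = \Gamma(H, u)$ with $H$ torsion-free and there is a central $c \in E_1$ with $nc = u$. Since $E$ has \RDP$_1$, its universal group has \RDP$_1$; by uniqueness of the universal group I may take $(H, u)$ to be it, so $H$ itself has \RDP$_1$. The main task is to construct $G$ and show $H \cong \mathbb{Z} \lex G$ with $u \mapsto (n, 0)$. I would take $G$ to be the convex subgroup of $H$ generated by $E_0$, equivalently $G = \{h \in H : -c < h < c\}$, equipped with the order inherited from $H$. The key step, essentially in \cite{DXY}, is to show that $\phi \colon \mathbb{Z} \lex G \to H$ defined by $\phi(k, g) := kc + g$ is a po-group isomorphism. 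Centrality of $c$ makes $\phi$ a group homomorphism. Surjectivity is a division-algorithm argument: given $h \in H$, the strong unit property bounds $h$ between multiples of $c$, so there is a largest $k$ with $kc \le h$; maximality combined with \RDP$_1$ then forces the remainder $g := h - kc$ to satisfy $0 \le g < c$, hence $g \in G$. Injectivity uses that $c$ is not an infinitesimal (torsion-freeness enters here) and order-preservation is routine.

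Once $H \cong \mathbb{Z} \lex G$ is established, everything else is automatic: $G$ is torsion-free because $H$ is, and---this is precisely the simplification over \cite{DXY}---both directedness of $G$ and \RDP$_1$ for $G$ follow in one stroke from Theorem~\ref{th:2.8} applied to the right-hand side. Uniqueness of $G$ is clear since $G$ is recovered intrinsically inside $H$ as the convex subgroup generated by $E_0$, so any two such $G$'s are canonically identified. I expect the main obstacle to be the division-algorithm step; it is the one place where centrality of $c$, torsion-freeness of $H$, and \RDP$_1$ of $H$ are all used simultaneously, and every other claim either follows from the general Section~2 theory or is delivered in one line by Theorem~\ref{th:2.8}.
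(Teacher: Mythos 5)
Your overall strategy diverges from the paper's: the paper proves Theorem \ref{th:4.1} in two lines by invoking \cite[Thm 6.7]{DXY}, which already establishes this equivalence under the additional hypothesis that $\mathbb Z\lex G$ satisfies \RDP$_1$, and then observing that Proposition \ref{pr:2.3} and Theorem \ref{th:2.8} make that extra hypothesis automatic. You correctly identify this as the source of the simplification, and your sufficiency direction (the explicit slices $E_i$, the uniqueness of the maximal ideal $E_0$, the central cyclic element $c=(1,0)$, torsion-freeness and \RDP$_1$ of $\mathbb Z\lex G$ via Theorem \ref{th:2.8}) is sound; it amounts to an honest verification of the ``if'' direction that the paper outsources to \cite{DXY}.

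The gap is in the necessity direction, where you attempt to reprove the hard half of \cite[Thm 6.7]{DXY} rather than cite it. Two steps, as written, would fail. First, your candidate $G=\{h\in H: -c<h<c\}$ is not equivalent to the convex subgroup generated by $E_0$: already in $\mathbb Z\lex G$ with $c=(1,0)$, the element $(1,g)$ with $g<0$ satisfies $-c<(1,g)<c$ but does not lie in $\{0\}\times G$, and that set need not even be a subgroup. Second, and more seriously, the division-algorithm step does not work as described: choosing the largest $k$ with $kc\le h$ yields only $c\not\le h-kc$, and in a partial (non-linear) order this gives neither $h-kc<c$ nor membership of the remainder in the convex subgroup generated by $E_0$. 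Similarly, for $\phi(k,g)=kc+g$ to be an order isomorphism one must show $kc+g\ge 0$ for every $k\ge 1$ and every $g$ in that subgroup, i.e., that every such $g$ is infinitesimal relative to $c$; this is exactly where the $n$-decomposition, the uniqueness of the maximal ideal $E_0$, and \RDP$_1$ must actually be deployed, and none of that is carried out in your sketch. These missing arguments are precisely the content of \cite[Thm 6.7]{DXY}; without either supplying them in full or citing that result, the ``only if'' direction is not established.
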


\begin{proof} According to Proposition \ref{pr:2.3} and Theorem \ref{th:2.8}, we see that the conditions of \cite[Thm 6.7]{DXY} are satisfied. Therefore, calling that result, we have proved the statement in question.
\end{proof}

This theorem was proved originally in \cite[Thm 6.7]{DXY} under more general conditions, namely it was necessary to suppose that $G$ satisfies \RDP$_1$ as well as $\mathbb Z\lex G$ satisfies RDP$_1$. According to Proposition \ref{pr:2.3} and Theorem \ref{th:2.8}, we see that the second  condition is in \cite[Thm 6.7]{DXY} superfluous.

Analogously, we can redefine the categorical equivalence from \cite[Thm 6.10]{DXY} as follows.  Let $n\ge 1$ be a fixed integer. Let $\mathcal{G}$ be the category whose objects are  torsion-free directed  po-groups $G$ with \RDP$_1$   and morphisms are  po-group homomorphisms. Let $\mathcal{SPPEA}_{n}$ be the category whose objects are strong
$n$-perfect pseudo effect algebras and morphisms are homomorphisms of pseudo effect algebras.

We define a functor
$\mathcal{E}_n:\mathcal{G}\rightarrow\mathcal{SPPEA}_{n}$ as follows: for $G\in \mathcal{G},$ let
$\mathcal{E}_n(G):=\Gamma(\mathbb{Z}\overrightarrow{\times}G,(n,0))$
and if $h$ is a group homomorphism with domain $G,$ we set
$$\mathcal{E}_n(h)(x)=(i,h((ic)/x)),$$
where $c$ is a unique strong cyclic
element of order $n$ in $E.$

Then the functor
$\mathcal{E}_n$ is a faithful and full functor from the category $\mathcal{G}$ of directed torsion-free po-groups with RDP$_1$ into the category
$\mathcal{SPPEA}_{n}$ of strong $n$-perfect pseudo effect algebras. In addition, we have the following result which improves \cite[Thm 6.7]{DXY}.

\begin{theorem}\label{th:4.2}
The functor $\mathcal{E}_n$ defines a categorical equivalence of the category $\mathcal{G}$ of directed torsion-free  po-groups with  {\rm RDP}$_1$ and the category
$\mathcal{SPPEA}_{n}$  of strong $n$-perfect pseudo-effect algebras.
\end{theorem}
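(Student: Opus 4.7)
The plan is to reduce the theorem to the previously established categorical equivalence \cite[Thm 6.10]{DXY}, leveraging Theorem \ref{th:2.8} of the present paper. In \cite[Thm 6.10]{DXY}, the same statement is proved under the a priori stronger hypothesis that $G$ is a torsion-free directed po-group with \RDP$_1$ \emph{and} that $\mathbb{Z}\lex G$ also satisfies \RDP$_1$. Theorem \ref{th:2.8} eliminates the second requirement, so the object class of $\mathcal{G}$ here coincides with the object class in \cite[Thm 6.10]{DXY}. Combined with Theorem \ref{th:4.1}, which already certifies essential surjectivity (every strong $n$-perfect pseudo effect algebra $E$ is isomorphic to some $\Gamma(\mathbb{Z}\lex G,(n,0))$ for a unique $G\in\mathcal{G}$), the heart of the argument is to verify that $\mathcal{E}_n$ is both full and faithful.

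For well-definedness on objects, I would check that, for $G\in\mathcal{G}$, the algebra $\mathcal{E}_n(G)=\Gamma(\mathbb{Z}\lex G,(n,0))$ is indeed a strong $n$-perfect pseudo effect algebra: the slices $E_i:=\{(i,g): g\in G\}\cap\mathcal{E}_n(G)$ furnish the $n$-decomposition, $E_0$ is the maximal ideal, the element $c=(1,0)$ satisfies $nc=(n,0)$ and lies in the center of $\mathbb{Z}\lex G$, and RDP$_1$ for $\mathbb{Z}\lex G$ follows from Theorem \ref{th:2.8}. For morphisms: any group homomorphism $h:G_1\to G_2$ in $\mathcal{G}$ induces a po-group homomorphism $\mathrm{id}_{\mathbb{Z}}\lex h:\mathbb{Z}\lex G_1\to\mathbb{Z}\lex G_2$ preserving the strong unit $(n,0)$, and restriction to the interval yields the desired morphism in $\mathcal{SPPEA}_n$; this matches the formula $(i,g)\mapsto(i,h(g))$ used to define $\mathcal{E}_n(h)$.

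For faithfulness and fullness, the key device is the universal group property: by \cite[Thm 7.2, Thm 7.4]{DvVe2} recalled after Example \ref{ex:2.10}, since $\mathbb{Z}\lex G_j$ satisfies \RDP$_1$ for $j=1,2$, the unital po-group $(\mathbb{Z}\lex G_j,(n,0))$ is the universal group of $\mathcal{E}_n(G_j)$. Hence any pseudo effect algebra homomorphism $f:\mathcal{E}_n(G_1)\to\mathcal{E}_n(G_2)$ extends uniquely to a po-group homomorphism $\tilde f:\mathbb{Z}\lex G_1\to\mathbb{Z}\lex G_2$ with $\tilde f(n,0)=(n,0)$; uniqueness of the lift gives faithfulness. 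For fullness, the unique strong cyclic element $c=(1,0)$ of order $n$ in $\mathcal{E}_n(G_1)$ must be mapped by $f$ to the corresponding unique such element in $\mathcal{E}_n(G_2)$, so $\tilde f(1,0)=(1,0)$ and therefore $\tilde f(k,0)=(k,0)$ for all $k\in\mathbb{Z}$. Then $\tilde f(0,g)\in\{0\}\times G_2$ for every $g\in G_1$, because the convex subgroup $\{0\}\times G_2$ is the kernel of the quotient $\mathbb{Z}\lex G_2\twoheadrightarrow\mathbb{Z}$ and $\tilde f$ commutes with this quotient. Defining $h:G_1\to G_2$ by $\tilde f(0,g)=(0,h(g))$ gives a group homomorphism with $\mathcal{E}_n(h)=f$.

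The main obstacle I anticipate is the bookkeeping in step (iii): verifying that the lifted po-group map $\tilde f$ genuinely restricts to a morphism in $\mathcal{G}$, in particular that it respects the $\mathbb{Z}$-component and preserves positivity on the $G$-slot. Once the universal property and the uniqueness of $c$ as the strong cyclic element of order $n$ are in hand, this is more careful checking than genuine obstruction. Concluding by the standard fact that a full, faithful, essentially surjective functor is an equivalence completes the proof.
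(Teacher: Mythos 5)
Your proposal is correct and follows essentially the same route as the paper: the paper's proof is a one-line reduction to Theorem \ref{th:4.1} together with the categorical equivalence already established in \cite{DXY}, exactly the reduction you make, with Theorem \ref{th:2.8} serving (as you note) to show that the extra hypothesis ``$\mathbb Z\lex G$ satisfies \RDP$_1$'' imposed in \cite{DXY} is automatic. The additional detail you supply on fullness and faithfulness via the universal group and the uniqueness of the strong cyclic element is material the paper delegates entirely to the citation, and it is sound.
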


\begin{proof}
It follows directly from Theorem \ref{th:4.1} and \cite[Thm 6.7]{DXY}.
\end{proof}

\end{document}